\numberwithin{equation}{section} 
\theoremstyle{plain}
\newtheorem{theorem}{Theorem}
\newtheorem{proposition}{Proposition}
\theoremstyle{definition}
\theoremstyle{remark}
\newtheorem{remark}{Remark}
\newcommand{\Ai}{\mathrm{Ai}}
\title[Optimal soft edge scaling]{Optimal soft edge scaling variables for the Gaussian and Laguerre even $\beta$ ensembles}
\author{Peter J. Forrester and Allan K. Trinh}
\address{Department of Mathematics and Statistics, 
	ARC Centre of Excellence for Mathematical \& Statistical Frontiers,
	University of Melbourne, Victoria 3010, Australia}
\email{pjforr@unimelb.edu.au}
\email{a.trinh4@student.unimelb.edu.au}
\date{\today}
\begin{document}
\maketitle
\begin{abstract}
The $\beta$ ensembles are a class of eigenvalue probability densities which generalise the invariant ensembles of classical random matrix theory.	
In the case of the Gaussian and Laguerre weights, the corresponding eigenvalue densities are known in terms of certain $\beta$ dimensional
integrals. We study the large $N$ asymptotics of the density with a soft edge scaling.  In the Laguerre case, this is done with both the parameter $a$ fixed, and with
$a$ proportional to $N$. It is found in all these cases that by appropriately centring the scaled variable, the leading correction term 
to the limiting density is $O(N^{-2/3})$. A known differential-difference recurrence from the theory of Selberg integrals allows for a numerical demonstration of this
effect.
\end{abstract}
	
	\section{Introduction}
	Universal results, whereby scaled large $N$ properties are independent of many local properties, abound in random matrix theory. Due to this, it is possible to talk of the statistical properties of large sized, bulk scaled complex Hermitian matrices, for example, without having to be more specific about the details of the ensemble. Thus the random Hermitian matrices may be formed entry-wise by sampling from a zero mean, finite standard deviation distribution -- this is the Wigner class -- or they may be defined by imposing a unitary invariant measure i.e. depending on the eigenvalues but not the eigenvectors; see e.g. \cite{PS11}.
	
	Bulk scaling refers to choosing the origin in the support of the spectrum but away from the edges, then choosing for the length scale units of the mean spacing in the neighbourhood of this point. This is to be contrasted to the soft edge scaling, for which the origin is chosen to be in the neighborhood of the largest (or possibly smallest) eigenvalue, and the length scale is chosen to be of the order of the spacing between the first and the second eigenvalues. 
	
	Universality results apply in both the bulk and soft edge scaling regimes \cite{Er11}. These two cases show themselves in some celebrated applications of random matrices. In the study of the statistical properties of the large Riemann zeros, the Montgomery-Odlyzko law asserts that upon scaling to have mean spacing unity, these coincide with the statistical properties of the bulk scaled eigenvalues of Hermitian matrices; see e.g. \cite{KS99}. And for growth models on curved interfaces in the KPZ class, fluctuations have been predicted theoretically \cite{PS01} and shown experimentally \cite{Ta11}, to be identical to the fluctuations of the soft edge scaled largest eigenvalue of a random Hermitian matrix. Since many statistical quantities can be computed exactly for bulk and soft edge scaled Hermitian random matrices (see e.g. \cite{Fo10}), there are explicit functional forms available against which to compare data relating to the applications. 
	
	Data on the Montgomery-Odlyzko law comes by way of various data sets of Odlyzko -- originally the high precision evaluation of the $10^{20}$-th Riemann zero and over $70\times 10^6$ neighbours \cite{Od89}, and more recently high precision evaluation of the ($10^{23} + 985,531,550$) zero and over $10^9$ neighbours as announced in \cite{Od01}. The latter sample size is sufficiently large that the difference between the empirical statistical quantity of interest (e.g. the nearest neighbour spacing) as determined from the data, and the random matrix prediction implied by the Montgomery-Odlyzko law, exhibits a distinct functional form \cite{Od01}. In keeping with the proposal of Keating and Snaith \cite{KS00a} that the eigenvalues of random unitary matrices chosen with Haar measure provides the correct statistical description of finite size effects, one can use knowledge of the leading correction to the $N\to\infty$ bulk scaled distribution in this random matrix ensemble to predict the forms seen in the data \cite{BBLM06,FM15,BFM17}.
	
	Soft edge finite size effects can also be analysed \cite{GFF05,Ch06,FFG06,EK06}. While different ensembles exhibit different functional forms for the leading correction, there is growing evidence that the optimal order -- as controlled by the precise choice of the centring variable -- has the leading soft edge correction for complex Hermitian matrices being proportional to $N^{-2/3}$ \cite{FT18}. Although not the subject of the present work, there is also a literature along these lines with respect to hard edge scaling; see \cite{EGP16,Bo16,PS16}.
	
	In the case of real symmetric matrices, the work of Johnstone \cite{Jo08} and collaborators \cite{JM12,Ma12} has shown that the optimal choice of the soft edge centring and scaling variables for the classical ensembles again leads to a correction proportional to $N^{-2/3}$. Here the term classical ensembles refers to the particular class of orthogonally invariant real symmetric random matrices with eigenvalue probability density (PDF) proportional to
	\begin{equation}\label{eq:4.1}
		\prod_{l=1}^N w(x_l)\prod_{1\leq j<k\leq N}\abs{x_k-x_j}
	\end{equation}
	chosen with weight function one of
	\begin{equation}\label{eq:4.2}
		w(x)=
		\begin{cases}
		e^{-x^2/2}, \quad &\text{Gaussian}
		\\
		x^{a/2} e^{-x/2}\chi_{x>0}, \quad &\text{Laguerre}
		\\
		x^a (1-x)^b\chi_{0<x<1}, \quad &\text{Jacobi}
		\end{cases}
	\end{equation}
	(the notation $\chi_A$ denotes the function taking the value $1$ for $A$ true and $0$ otherwise).
	
	The purpose of the present paper is to study the optimal choice of the soft edge scaling for the $\beta$-generalisation of the Gaussian and Laguerre orthogonal ensembles as specified by (\ref{eq:4.1}) and (\ref{eq:4.2}). Now the eigenvalue PDF is proportional to
	\begin{equation}\label{eq:4.3}
	\prod_{l=1}^N w_\beta(x_l)\prod_{1\leq j<k\leq N}\abs{x_k-x_j}^\beta
	\end{equation}
	with weight function
	\begin{equation}\label{eq:4.4}
	w_\beta (x)=
	\begin{cases}
	e^{-\beta x^2/2}, \quad &\text{Gaussian}
	\\
	x^{\beta a/2} e^{-\beta x/2}\chi_{x>0}, \quad &\text{Laguerre}.
	\end{cases}
	\end{equation}
	The matrix ensemble implied by (\ref{eq:4.3}) in the Gaussian case was first introduced by Dyson \cite{Dy62} as an interpolation of the classical Gaussian orthogonal ($\beta=1$), unitary ($\beta=2$) and sympletic ($\beta=4$) ensembles. Later it was shown that in both the Gaussian and Laguerre cases, (\ref{eq:4.3}) for general $\beta>0$ can be realised as the eigenvalue PDF for particular real tridiagonal matrices \cite{DE02}.

	The quantity to be considered is the eigenvalue density
	\begin{equation}\label{eq:5.1}
		\rho_{N,\beta}^\#(x)=w_\beta (x) {N\over Z_{\beta,N}^\#}\int_{\mathbb{R}^{N-1}}\prod_{l=2}^N w_\beta(x_l) \abs{x-x_l}^\beta
		\prod_{2\leq j<k\leq N}\abs{x_k-x_j}^\beta \, dx_2\cdots dx_N,
	\end{equation}
	where $\#=G,(L,a)$ depending on the weight (\ref{eq:4.4}) and
	\begin{equation}\label{eq:5.1a}
		Z_{\beta,N}^\#=\int_{\mathbb{R}^N} \prod_{l=1}^N w_\beta(x_l)\prod_{1\leq j<k\leq N}\abs{x_k-x_j}^\beta \, dx_2\cdots dx_N
	\end{equation}
	is the normalisation. We note that (\ref{eq:5.1}) can be written as
	\begin{equation}\label{eq:5.2}
		\rho_{N,\beta}^\#(x)=w_\beta(x) {N Z_{\beta,N-1}^\# \over Z_{\beta,N}^\#}
		\left\langle
		\prod_{l=2}^N \abs{x-x_l}^\beta
		\right\rangle_{N-1}^\#,
	\end{equation}
	where $\langle \, \cdot \, \rangle_{N-1}^\#$ denotes the average with respect to (\ref{eq:4.3}) in the case of $N-1$ eigenvalues $\{ x_j \}_{j=2}^N$. For $\beta$ even, the average is a polynomial in $\beta$, and it is this feature which gives opportunity to analyse the soft edge scaling limit.
	
	Earlier work of Desrosiers and Forrester \cite{DF06} has, for $\beta$ even, computed the soft edge scaling limit of $\rho_{N,\beta}^\#(x)$ for $\#=G$ and $(L,a)$. This involves the multidimensional integral
	\begin{equation}\label{eq:K}
	K_{n,\beta}(x)=-\frac{1}{(2\pi i)^n}\int_{-i\infty}^{i\infty}dv_1 \cdots \int_{-i\infty}^{i\infty}dv_\beta\, \prod_{j=1}^\beta e^{v_j^3/3-xv_j}\prod_{1\leq k<l\leq n}\abs{v_k-v_j}^{4/\beta},
	\end{equation}
	which extends the integral representation of the Airy function,
	\begin{equation}\label{eq:K1}
		\Ai(x)=\frac{1}{2\pi i}\int_{-i\infty}^{i\infty} e^{v^3/3-xv}\,dv.
	\end{equation}
	It was shown in \cite{DF06} that
	\begin{align}
		\lim_{N\to\infty} \frac{1}{\sqrt{2}N^{1/6}}\rho_{N,\beta}^G\left( \sqrt{2N}+\frac{x}{\sqrt{2}N^{1/6}} \right) &= 
		\lim_{N\to\infty} 2(2N)^{1/3}\rho_{N,\beta}^{(L,a)}\left( 4N+2(2N)^{1/3}x \right) \nonumber
		\\
		&= \frac{1}{2\pi} \left( \frac{4\pi}{\beta} \right)^{\beta/2} \Gamma(1+\beta/2)\prod_{j=1}^{\beta} \frac{\Gamma(1+2/\beta)}{\Gamma(1+2j/\beta)} K_{\beta,\beta}(x) \nonumber
		\\
		&=: \rho_{\infty,\beta,0}(x).
		\label{eq:K2}
	\end{align}
	Here we extend these results to bound the leading correction, by identifying the optimal choice of the centring variable.
	\begin{theorem}\label{T1}
		We have that
		\begin{equation*}
			\frac{1}{\sqrt{2}N^{1/6}}\rho_{N,\beta}^G\left( \sqrt{2N}+\frac{1}{\sqrt{2}N^{1/6}}
			\left(x+ \left( \frac{1}{2}-\frac{1}{\beta} \right) \frac{1}{N^{1/3}}
			\right)
			\right)
		\end{equation*}
		and
		\begin{equation*}
		2(2N)^{1/3}\rho_{N,\beta}^{(L,a)}\left( 4N+2(2N)^{1/3}
		\left(x+\frac{a}{(2N)^{1/3}}
		\right)
		\right)
		\end{equation*}
		are, for large $N$, equal to the RHS of (\ref{eq:K2}) with a correction term $O(N^{-2/3})$.
	\end{theorem}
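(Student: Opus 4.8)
The plan is to return to an exact finite-$N$ representation of $\rho_{N,\beta}^\#(x)$, subject it to the soft edge scaling, and push the saddle point analysis of \cite{DF06} one order further. For $\beta$ even the average in (\ref{eq:5.2}) is a polynomial in $x$ (the absolute values being redundant), and the Selberg integral (Dixon--Anderson) duality used in \cite{DF06} rewrites it as a $\beta$-dimensional integral. Schematically
$$
\rho_{N,\beta}^\#(x) = w_\beta(x)\, c_{N,\beta}^\#\, \frac{1}{(2\pi i)^\beta}\int \cdots \int \prod_{j=1}^\beta \psi_N^\#(v_j;x) \prod_{1\le k<l\le \beta}(v_k-v_l)^{4/\beta}\, dv_1\cdots dv_\beta ,
$$
where the Vandermonde-type factor is exactly the one occurring in $K_{\beta,\beta}$, the prefactor $c_{N,\beta}^\#$ is built from ratios of Selberg/Mehta normalisations (hence known in closed form, with an explicit large-$N$ expansion via Stirling's formula), and $\psi_N^\#(v;x)$ is a single-particle factor of the form (power of $N$) $\times$ ($N$-independent weight), with the $x$-dependence entering linearly inside the power. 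Since the $N\to\infty$ form of this integral is already identified in (\ref{eq:K2}), only the first correction remains to be controlled.

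Next I would insert the centred scaled variable: in the Gaussian case $x=\sqrt{2N}+\tfrac{1}{\sqrt2 N^{1/6}}\bigl(X+(\tfrac12-\tfrac1\beta)N^{-1/3}\bigr)$, and correspondingly rescale each $v_j$ so that the coalescing saddle of $\psi_N^G$ sits at the origin on the $O(1)$ scale; the Laguerre case is handled the same way with $x=4N+2(2N)^{1/3}\bigl(X+a(2N)^{-1/3}\bigr)$, the extra shift simply moving the origin to the true spectral edge $(\sqrt N+\sqrt{N+a})^2 = 4N+2a+O(a^2/N)$ rather than to $4N$. The analysis of $\psi_N^\#$ near the edge is cleanest with a Chester--Friedman--Ursell cubic change of variable, which turns the relevant exponent exactly into $w^3/3-\zeta w$, with $\zeta$ the smooth image of the scaled variable and all remaining $N$-dependence sitting in $\zeta$ and in the Jacobian, both smooth. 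Expanding the Jacobian about $w=0$ and using the $N^{-1/3}$ scaling of $w$, the correction at order $N^{-1/3}$ is a term \emph{linear} in the integration variable (plus a $v$-independent constant, which is absorbed into the normalisation), so after integration it is a multiple of $\partial_X$ of the leading integral; the first genuinely new shape (quadratic and higher in $w$, producing $\partial_X^2$ and beyond) appears only at order $N^{-2/3}$. The choice of shift of $X$ is then dictated by requiring this $\partial_X$-term to vanish, and carrying the expansions of $\psi_N^\#$ and of $c_{N,\beta}^\#$ through pins the shift to $\tfrac12-\tfrac1\beta$ in the Gaussian case (the $\tfrac1\beta$ tracking the $\beta$ in the weight $e^{-\beta x^2/2}$ and in the normalisation ratio) and to $a$ in the Laguerre case.

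With the single-particle asymptotics $\psi_N^\#(v_j;x)=d_N^\#\,e^{v_j^3/3-Xv_j}\bigl(1+O(N^{-2/3})\bigr)$ in hand, the rest is bookkeeping: the factorised structure of the integrand gives a relative error $\prod_j(1+O(N^{-2/3}))=1+O(N^{-2/3})$, so up to this error the whole $\beta$-fold integral equals $(d_N^\#)^\beta$ times the multiple integral defining $K_{\beta,\beta}(X)$ in (\ref{eq:K}). Collecting the remaining prefactors --- $w_\beta$ at the scaled argument, the explicit $c_{N,\beta}^\#$, the $(d_N^\#)^\beta$, and the overall $\tfrac{1}{\sqrt2 N^{1/6}}$ or $2(2N)^{1/3}$ --- and using that the leading term must reproduce the right-hand side of (\ref{eq:K2}) as an internal check, yields the claimed expansion. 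The contour rotations onto steepest-descent paths through the coalescing saddle, and the domination needed to pass the error term through the $\beta$-fold integral along its non-compact tails, are of the type already used in the even-$\beta$ bulk and hard edge analyses.

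The main obstacle is the uniform control of the $O(N^{-1/3})$ term in $\psi_N^\#$: one must expand the $N$-dependent power about the double saddle, together with the $N$-independent weight factor and the closed-form $c_{N,\beta}^\#$, to the order exposing the cancellation, \emph{uniformly} along the scaled contour including its tails, and verify that at order $N^{-1/3}$ only a term linear in the integration variable survives --- this is precisely what makes the correction a pure $\partial_X$, and it hinges on the classical structure of the Gaussian and Laguerre weights, in particular the linear $x$-dependence of $\psi_N^\#$. Once this is secured, the differential-difference recurrence from Selberg integral theory supplies an independent numerical confirmation that no $O(N^{-1/3})$ term remains.
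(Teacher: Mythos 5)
Your overall architecture matches the paper's: duality to a $\beta$-fold contour integral, saddle point expansion about the coalescing saddle pushed one order beyond \cite{DF06}, identification of the $O(N^{-1/3})$ term as a multiple of $\tfrac{d}{dx}$ of the leading order, and cancellation of that term by the recentring. However, the crux of the argument is precisely the step you assert rather than prove, and your stated mechanism for it is not correct. When the exponent is expanded about the double saddle (the paper's (\ref{eq:11a})--(\ref{eq:11b})), the $O(N^{-1/3})$ correction in the Gaussian case is $c_1^G(v;x,\beta)=-\tfrac{3\beta}{4}x^2+\sum_j(-v_j^4/4+v_j^2x-v_j)$: it contains quartic and quadratic terms in the integration variables and an $x^2$-proportional constant coming from the weight $e^{-\beta N x^2}$ at the scaled argument. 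The constant cannot be ``absorbed into the normalisation'': the normalisation ratio expands as $1+O(N^{-1})$ (the paper's (\ref{eq:12.3})), so a surviving constant at order $N^{-1/3}$ would produce a correction proportional to $x^2\rho_{\infty,\beta,0}(x)$ that no shift of $x$ can remove. The paper's Proposition \ref{P1} is exactly the missing ingredient: integration-by-parts identities (\ref{eq:K4x}), (\ref{eq:K5x}), (\ref{eq:K6x}), which exploit the antisymmetry and symmetrisation of the $\prod|v_k-v_l|^{4/\beta}$ interaction, reduce $\mathbb{K}_{\beta,\beta}[\sum v_j^4]$ and $\mathbb{K}_{\beta,\beta}[\sum v_j^2]$ to $x^2K_{\beta,\beta}$ and $\mathbb{K}_{\beta,\beta}[\sum v_j]$ terms; the $x^2$ coefficients then cancel identically and the residue is $(\tfrac12-\tfrac1\beta)\mathbb{K}_{\beta,\beta}[\sum v_j]=-(\tfrac12-\tfrac1\beta)\tfrac{d}{dx}K_{\beta,\beta}$. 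Note in particular that the $\tfrac1\beta$ in the shift originates from the interaction term in (\ref{eq:K5x}); your sketch never engages with the interaction factor, so it cannot actually produce this coefficient.

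Your proposed Chester--Friedman--Ursell normal form could in principle replace the Taylor-plus-integration-by-parts route (it would build the reduction of the quartic terms into the change of variables), but then the two sources of linear-in-$w$ corrections you must track are the Jacobian of the cubic map \emph{and} the transformation of $\prod|v_k-v_l|^{4/\beta}$ under it, while the $x^2/N^{1/3}$ constants from $e^{-\beta Nx^2}$ and from the exponential of the saddle value must be shown to cancel exactly. None of these computations is carried out or even set up in your proposal, and they are where the content of Theorem \ref{T1} lies (the Laguerre fixed-$a$ case is the one instance where your picture is literally accurate, since there $c_1^{(L,a)}=\tfrac{a}{2^{1/3}}\sum_j v_j$ is already linear). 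As written, the proposal is a correct road map with the decisive verification missing.
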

	An analogous result can also be established for a soft edge scaling in the Laguerre case with $a=\alpha N$.
	\begin{theorem}\label{T2}
		Consider the Laguerre case of (\ref{eq:4.3}) with $a=\alpha N$. Let
		\begin{equation}\label{eq:b}
			b=\left(
			\frac{1}{\sqrt{1+\alpha}}+1
			\right)
			\left(
			\frac{\sqrt{1+\alpha}+1}{2}
			\right)^3.
		\end{equation}
		We have that
		\begin{equation}\label{eq:b1}
			2(bN)^{1/3} \rho_{N,\beta}^{(L,\alpha N)}\left( N(\sqrt{1+\alpha}+1)^2+2(bN)^{1/3}
			\left(x+\frac{\alpha}{\sqrt{1+\alpha}} \left( \frac{1}{2}-\frac{1}{\beta} \right) \frac{1}{2(bN)^{1/3}}
			\right)
			\right)
		\end{equation}
		is, for large $N$, equal to the RHS of (\ref{eq:K2}) with a correction term $O(N^{-2/3})$.
	\end{theorem}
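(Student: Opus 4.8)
\section*{Proof proposal for Theorem \ref{T2}}

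The plan is to reduce the density to an exact, finite-$N$ $\beta$-dimensional integral and then run a soft-edge saddle-point analysis one order beyond leading. Starting from (\ref{eq:5.2}) with $\#=(L,\alpha N)$, the average $\big\langle \prod_{l=2}^N|x-x_l|^\beta\big\rangle_{N-1}^{(L,\alpha N)}$ is, for $\beta$ even and $x$ near the right edge, an average of $\prod_{l=2}^N(x-x_l)^\beta$, i.e.\ of a power of a characteristic polynomial. By the Desrosiers--Forrester duality for such Laguerre averages \cite{DF06} (which swaps $(N,\beta)\leftrightarrow(\beta,4/\beta)$ and produces exactly the finite-$N$ precursor of $K_{\beta,\beta}$ in (\ref{eq:K})), this equals a $\beta$-fold integral with integrand of the form $\prod_{j=1}^\beta v_j^{\,\lambda}e^{-(\beta/2)v_j}\,P(x;v_1,\dots,v_\beta)\prod_{1\le k<l\le\beta}|v_k-v_l|^{4/\beta}$, where $\lambda$ and the polynomial $P$ carry the dependence on $a=\alpha N$ and on $N$. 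Combining this with the closed-form Selberg evaluation of the partition-function ratio $Z_{\beta,N-1}^{(L,a)}/Z_{\beta,N}^{(L,a)}$ (a ratio of Gamma functions) and the explicit weight $w_\beta(x)=x^{\beta\alpha N/2}e^{-\beta x/2}$ gives a clean exact formula for $\rho_{N,\beta}^{(L,\alpha N)}(x)$ as (prefactor)$\,\times\,(\beta$-dimensional integral) to which asymptotics can be applied.

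Next I insert the scaling $x=N(\sqrt{1+\alpha}+1)^2+2(bN)^{1/3}x$ (momentarily dropping the centring correction) and rescale the $v_j$ by the natural soft-edge factor, so that the $\beta$-fold integrand takes the form $\exp\!\big(N\Phi_0+N^{2/3}\Phi_1+N^{1/3}\Phi_2+\Phi_3+N^{-1/3}\Phi_4+\cdots\big)$ times the factor $\prod_{k<l}|v_k-v_l|^{4/\beta}$. The role of the Marchenko--Pastur edge $N(\sqrt{1+\alpha}+1)^2$ and of the scale constant $b$ in (\ref{eq:b}) is precisely that, after combining the weight factor $x^{\beta\alpha N/2}e^{-\beta x/2}$ with the dual integrand, the leading pieces collapse to the Airy cubic: the $v$-dependent part of the exponent degenerates to $\sum_{j=1}^\beta(v_j^3/3-xv_j)$ in the limit. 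This is the same mechanism by which (\ref{eq:K2}) was obtained in \cite{DF06}; the Stirling/Gamma-function asymptotics of $Z_{\beta,N-1}^{(L,a)}/Z_{\beta,N}^{(L,a)}$ together with the normalisation of the dual integral assemble into exactly the constant prefactor of $K_{\beta,\beta}(x)$ appearing in (\ref{eq:K2}), recovering $\rho_{\infty,\beta,0}(x)$.

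The heart of the proof is the correction analysis: carry the expansion one further order and collect every contribution to the coefficient of $N^{-1/3}$. There are three sources — the explicit weight $w_\beta(x)$, whose own $N$-dependence through $x^{\beta\alpha N/2}$ makes this case more delicate than Theorem \ref{T1}; the Gamma-function expansions of the partition-function ratio and of the dual normalisation; and the subleading terms $\Phi_3,\Phi_4$ in the integrand, including the first correction to the non-analytic factor $\prod_{k<l}|v_k-v_l|^{4/\beta}$. I expect all of these to reorganise into a single term proportional to $\partial_x$ of the leading integral, i.e.\ into $-\,c\,N^{-1/3}\rho_{\infty,\beta,0}'(x)$ for a computable constant $c$ equal to $\tfrac{\alpha}{\sqrt{1+\alpha}}\big(\tfrac12-\tfrac1\beta\big)\tfrac{1}{2b^{1/3}}$. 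Granting this, restoring the centring shift $x\mapsto x+\tfrac{\alpha}{\sqrt{1+\alpha}}\big(\tfrac12-\tfrac1\beta\big)\tfrac{1}{2(bN)^{1/3}}$ and Taylor-expanding $\rho_{\infty,\beta,0}$ cancels the $O(N^{-1/3})$ term identically, leaving the first surviving correction at $O(N^{-2/3})$; the differential--difference recurrence from Selberg-integral theory then furnishes an independent numerical check. The main obstacle is exactly this bookkeeping step — showing that the three disparate $O(N^{-1/3})$ contributions really do combine into a pure derivative of the limiting density with the stated coefficient — and in particular handling uniformly in the integration region both the $N$-dependent weight exponent and the non-analytic Vandermonde-type factor $\prod_{k<l}|v_k-v_l|^{4/\beta}$.
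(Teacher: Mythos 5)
Your overall strategy coincides with the paper's: apply the Desrosiers--Forrester duality to convert the average of the $\beta$-th power of the characteristic polynomial into a $\beta$-fold integral, centre at the Marchenko--Pastur edge $N(\sqrt{1+\alpha}+1)^2$ with scale $b$, perform a steepest-descent expansion about the double saddle to one order beyond leading, and then argue that the resulting $O(N^{-1/3})$ term is a pure derivative of $\rho_{\infty,\beta,0}$ so that the stated centring shift cancels it. (A minor inaccuracy: the dual object here is a circular-ensemble average, realised as a contour integral around $u=1$ with exponent (\ref{eq:L2}), not a Laguerre-weighted real integral as your description of the integrand suggests; this does not affect the strategy.)

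However, there is a genuine gap at precisely the step you flag as ``the main obstacle'': you \emph{assert} that the three sources of $O(N^{-1/3})$ contributions reorganise into $-c\,N^{-1/3}\rho_{\infty,\beta,0}'(x)$ with $c=\tfrac{\alpha}{\sqrt{1+\alpha}}\bigl(\tfrac12-\tfrac1\beta\bigr)\tfrac{1}{2b^{1/3}}$, but you give no mechanism for proving it, and without it the theorem is not established -- an $O(N^{-1/3})$ term that is not exactly a derivative cannot be removed by any recentring. The paper's proof of this step (Proposition \ref{P2}, resting on the identities proved in Proposition \ref{P1}) is an integration-by-parts argument inside the functional $\mathbb{K}_{n,\beta}$ of (\ref{eq:K1x}): from $0=\mathbb{K}_{n,\beta}[\sum_j\partial_{v_j}]$ and $0=\mathbb{K}_{n,\beta}[\sum_j\partial_{v_j}v_j^2]$, together with the antisymmetry cancellation of the terms coming from differentiating $\prod|v_k-v_l|^{4/\beta}$, one gets $\mathbb{K}_{n,\beta}[\sum_j v_j^2]=nxK_{n,\beta}$ and $\mathbb{K}_{n,\beta}[\sum_j v_j^4]=x\,\mathbb{K}_{n,\beta}[\sum_j v_j^2]-\bigl(\tfrac4\beta(n-1)+2\bigr)\mathbb{K}_{n,\beta}[\sum_j v_j]$, while $\mathbb{K}_{\beta,\beta}[\sum_j v_j]=-\tfrac{d}{dx}K_{\beta,\beta}$. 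Applied to the explicit $c_1^{(L,\alpha N)}$ of (\ref{eq:L9}) (which contains $\sum_j v_j^4/8$, $\sum_j v_j$ and the $x^2$ constant produced by the $N$-dependent weight factor), the $x^2 K_{\beta,\beta}$ pieces cancel identically and only the multiple of $\tfrac{d}{dx}K_{\beta,\beta}$ survives, with exactly the coefficient you guessed. Supplying these identities (or an equivalent device) is the missing content; everything else in your outline is a correct, if schematic, account of the paper's route.
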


	Moreover, our working below gives the explicit form of the $O(N^{-2/3})$ leading order correction term in each case in terms of generalisations of (\ref{eq:K}). In the Gaussian case this is given by (\ref{eq:14c}) while in the Laguerre case the relevant expressions are (\ref{eq:rLaN}) with $a=\alpha N$ and (\ref{eq:rLa}) with a fixed.
	
	In Section \ref{S2} we revise known results for the leading correction to not only the density, but also the general $k$-point correlation function for the Gaussian and Laguerre unitary ensembles ($\beta=2$ case of (\ref{eq:4.3}) and (\ref{eq:4.4})). Specialised to the density, these results provide a check on our subsequent working. We proceed in Section \ref{S3} to establish the Gaussian case of Theorem \ref{T1}. Section \ref{S4} establishes the Laguerre case of Theorem \ref{T1}, as well as Theorem \ref{T2}. A known differential-difference recurrence for computing the polynomial in (\ref{eq:5.2}) allows for a numerical and graphical demonstration, given in Section \ref{S5}, that the identified scaling variables exhibit convergence at the claimed rate.
	
	\section{The case $\beta=2$}\label{S2}
	A distinguishing feature of the case $\beta=2$ is that the general $k$-point correlation is determinantal,
	\begin{equation}\label{eq:7.1}
		\rho_{(k)}(x_1,...,x_k)=\det[K_N(x_j,x_l)]_{j,l=1}^k
	\end{equation}
	for some correlation kernel $K_N$ independent of $k$. This provides a strategy to establish the leading correction  term to the limiting
	form of $\rho_{(k)}$ 
	under optimal soft edge scaling of this correlation function: thus one should analyse the corresponding asymptotic properties of $K_N$. This was done in the Gaussian and Laguerre cases (the latter with $a$ fixed) by Choup \cite{Ch06} and later extended to the Laguerre unitary ensemble with $a=\alpha N$ in \cite{FT18}, giving an $O(N^{-2/3})$ correction in each case.
	
	To present these results, define
	\begin{equation}\label{eq:7.2}
		s_x=
		\begin{cases}
		\sqrt{2N}+x/\sqrt{2}N^{1/6}, \quad &\text{Gaussian}
		\\
		4N+2a+2(2N)^{1/3}x, \quad &\text{Laguerre}
		\\
		N(\sqrt{1+\alpha}+1)^2+2(bN)^{1/3}x, \quad &\text{Laguerre } a=\alpha N,
		\end{cases}
	\end{equation}
	where in the final formula $b$ is given by (\ref{eq:b}). Also set
	\begin{align}\label{7.3}
	\rho_{\infty,\beta=2,1}(x)&=(\Ai'(x))^2-x(\Ai(x))^2 
	\\
	\rho_{\infty,\beta=2,1}^G(x)&=
	-\frac{1}{20}\left( 3x^2(\Ai(x))^2-2x(\Ai'(x))^2-3\Ai(x)\Ai'(x)\right) \nonumber
	\\
	\rho_{\infty,\beta=2,1}^{(L,a)}(x)&=
	\frac{2^{1/3}}{10}\left( 3x^2(\Ai(x))^2-2x(\Ai'(x))^2+2\Ai(x)\Ai'(x)\right) \nonumber
	\\
	\rho_{\infty,\beta=2,1}^{(L,\alpha N)}(x)&=
	{1\over 2b^{2/3}}\bigg[
	\left( \frac{3}{5} \frac{b}{\sqrt{1+\alpha}} - \frac{3}{20} \frac{\alpha^2}{1+\alpha} \right) x^2\mathrm{Ai}(x)^2
	+ \left( \frac{2}{5} \frac{b}{\sqrt{1+\alpha}} + \frac{1}{40} \frac{\alpha^2}{1+\alpha} \right) \mathrm{Ai}(x)\mathrm{Ai}'(x) \nonumber
	\\
	&\phantom{b^{-2/3} \bigg[\left( \frac{3}{5} \frac{b}{\sqrt{1+\alpha}} - \frac{3}{20} \frac{\alpha^2}{1+\alpha} \right) x^2\mathrm{Ai}(x)^2} 
	+ \left( -\frac{2}{5} \frac{b}{\sqrt{1+\alpha}} + \frac{1}{10} \frac{\alpha^2}{1+\alpha} \right) x\mathrm{Ai}'(x)^2
	\bigg].\nonumber
	\end{align}
	The results of \cite{Ch06,FT18} tell us that for large $N$
	\begin{equation}\label{eq:8.1}
		{\partial s_x\over \partial x} K_N^\#(s_x,s_y) = { \Ai(x)\Ai'(y)-\Ai'(x)\Ai(y)\over x-y } + O(N^{-2/3})
	\end{equation}
	for $\#=G,(L,a),(L,\alpha N)$ and furthermore tell us the precise functional form of the term $O(N^{-2/3})$. To present the latter, we specialise to the case $x=y$, when the LHS of (\ref{eq:8.1}) corresponds to the soft edge scaled density. Using the notation (\ref{7.3}) we have
	\begin{equation}\label{eq:8.2}
		{\partial s_x\over \partial x}\rho_{N,\beta=2}^\#(s_x)
		= \rho_{\infty,\beta=2,0}(x)+\frac{1}{N^{2/3}}\rho_{\infty,\beta=2,1}^\#(x)+\cdots.
	\end{equation}
	
	\section{The Gaussian $\beta$ ensemble}\label{S3}
	Key to analysing (\ref{eq:5.2}) in the Gaussian case for $\beta$ even is the duality formula \cite{BF97a}
	\begin{equation}\label{eq:9}
		\left\langle \prod_{l=1}^N(x-x_l)^n \right\rangle_{\mathrm{ME}_{\beta,N}(e^{-\beta x^2/2})} \,
		= \left\langle \prod_{l=1}^n(x-i x_l)^N \right\rangle_{\mathrm{ME}_{4/\beta,n}(e^{-x^2})}.
	\end{equation}
	Here we have used the notation $\mathrm{ME}_{\beta,N}[w(x)]$ to refer to the PDF corresponding to (\ref{eq:4.3}). Note the exchange of roles $(N,n,\beta)\to (n,N,4/\beta)$ between the two sides of (\ref{eq:9}). We see that the average in (\ref{eq:5.2}) is of the form of the LHS of (\ref{eq:9}) for $\beta$ even, telling us that then
	\begin{equation}\label{eq:9.1}
		\rho_{N,\beta}^G(x)= e^{-\beta x^2/2} \, {N Z_{\beta,N-1}^G\over Z_{\beta,N}^G}
		\left\langle \prod_{l=1}^\beta(x-i x_l)^{N-1} \right\rangle_{\mathrm{ME}_{4/\beta,\beta}(e^{-x^2})}.
	\end{equation}
	The normalisation $Z_{\beta,N}^G$ and that implied by the average in (\ref{eq:9.1}) are known explicitly, see e.g. \cite{BF97a}. Introducing the notation of this latter reference allows (\ref{eq:9.1}) to be written, after minor manipulation,
	\begin{equation}\label{eq:12.0}
		\rho_{N,\beta}^G(\sqrt{2N}x)=\frac{1}{2}{ G_{\beta,N-1}\over G_{\beta,N}\mathcal{N}_\beta }
		(2N)^{\beta N/2+\beta} e^{-\beta Nx^2} R_{N,\beta}(x),
	\end{equation}
	where
	\begin{align*}
		G_{\beta,N}&=g_{\beta,N}\prod_{j=2}^N {\Gamma(1+j\beta/2)\over \Gamma(1+\beta/2)},
		\quad g_{\beta,N}=(2\pi)^{N/2}\beta^{-N(1/2+\beta(N-1)/4)}
		\\
		\mathcal{N}_\beta &= \frac{\pi^{\beta/2}}{2^{\beta-1}} \prod_{j=2}^\beta {\Gamma(1+2j/\beta)\over \Gamma(1+2/\beta)}
	\end{align*}
	and
	\begin{equation}\label{eq:12.1}
		R_{N,\beta}(x)=\int_\mathcal{C} du_1\, e^{Nf(u_1,x)}\cdots\int_\mathcal{C} du_\beta\, e^{Nf(u_\beta,x)}
		\prod_{1\leq j<k\leq \beta}\abs{u_k-u_j}^{4/\beta}
	\end{equation}
	with
	\begin{equation}\label{eq:12.2}
		f(u,x)=-2u^2+\log(iu+x)-\frac{1}{N}\log(iu+x).
	\end{equation}
	The contour $\mathcal{C}$ in (\ref{eq:12.1}) is equal to $\mathbb{R}$ as read off from (\ref{eq:9.1}), however by first ordering $u_1<u_2<\cdots<u_\beta$ the product of differences in the integrand can be interpreted as an analytic function and deformed into the complex plane by Cauchy's theorem (see \cite{DF06}).
	
	Our interest is the large $N$ asymptotics of (\ref{eq:12.0}) after the replacement
	\begin{equation}\label{eq:12.2a}
		x\mapsto 1+\frac{x}{2N^{2/3}},
	\end{equation}
	where the new scaled variable $x$ is $O(1)$. The choice (\ref{eq:12.2a}) is consistent with the first case of
	(\ref{eq:7.2}).  We begin by noting that upon simplification, then use of Stirling's formula, one can check
	\begin{multline}\label{eq:12.3}
		\frac{1}{2}{ G_{\beta,N-1}\over G_{\beta,N}\mathcal{N}_\beta }
		(2N)^{\beta N/2+\beta}
		= \beta^{-\beta/2}2^{-5/2+2\beta +N\beta}\pi^{-1-\beta/2}N^{-1/2+\beta}e^{N\beta/2}\Gamma(1+\beta/2)
		\prod_{j=2}^\beta { \Gamma(1+2/\beta)\over \Gamma(1+2j/\beta)}
		\\
		\times ( 1+O(N^{-1})) .
	\end{multline}
	Also we have
	\begin{equation}\label{eq:12.3a}
		e^{-\beta Nx^2}\bigg\rvert_{x\mapsto 1+x/2N^{2/3}}=e^{-\beta N-\beta xN^{1/3}-\beta x^2/4N^{1/3}}.
	\end{equation}
	In relation to  the exponent in (\ref{eq:12.1}), with $g(u)=-2u^2+\log(iu+1)$ we see from (\ref{eq:12.2}) that
	\begin{multline}\label{eq:12.3b}
		f(u,1+x/2N^{2/3})=g(u)+\frac{1}{N^{1/3}} {1\over 2(iu+1)}-\frac{1}{N} \log(iu+1)-\frac{1}{N^{4/3}} {x^2\over 8(iu+1)^2}
		\\
		-{1\over N^{5/3}} {x\over 2(iu+1)} + O(N^{-1}).
	\end{multline}
	With (\ref{eq:12.3b}) substituted in (\ref{eq:12.1}), the next step is to perform a saddle point analysis of (\ref{eq:12.1}). This has been done to leading order in \cite{DF06} -- here we extend the working therein to the next order but appealing to the latter reference for the formal justification of the methodology. 
	
	According to (\ref{eq:12.3b}) the leading order term in the exponent of (\ref{eq:12.1}) is $g(u)$, which we can check has a double saddle point at $u=u_0:=i/2$. Expanding about this point by introducing the variable $t=u-u_0$, then changing variables $v=2iN^{1/3}t$ -- this is in keeping with the direction of steepest descent; see \cite{DF06} -- gives
	\begin{multline*}
		Nf(i/2+v/2iN^{1/3},1+x/2N^{2/3})=\frac{N}{2}-N\log 2 +N^{1/3} x +\log 2 +v^3/3-v x
		\\
		+\frac{1}{N^{1/3}}\left( -v^4/4+v^2 x-v -x^2/2 \right) 
		+\frac{1}{N^{2/3}}\left( v^5/5-v^3 x+v^2/2 +v x -x \right) + O(N^{-1}).
	\end{multline*}
	Consequently
	\begin{multline}\label{eq:11a}
		\prod_{l=1}^\beta e^{Nf(u_l,1+x/2N^{2/3})}=
		e^{\beta N/2-N\beta \log 2 +\beta xN^{1/3} + \beta\log 2}\, e^{\beta x^2/4N^{1/3}}
		\\
		\times\prod_{j=1}^\beta e^{v_j^3/3-v_j x}
		\left( 1+\frac{1}{N^{1/3}}c_1^G(v;x,\beta)+\frac{1}{N^{2/3}}c_2^G(v;x,\beta)
		+ O(N^{-1})
		\right),
	\end{multline}
	where
	\begin{align}\label{eq:11b}
		c_1^G(v;x,\beta) &= -\frac{3\beta}{4}x^2 + \sum_{j=1}^\beta (-v_j^4/4+v_j^2 x-v_j)
		\\
		c_2^G(v;x,\beta) &= -\beta x + \sum_{j=1}^\beta (v_j^5/5 - v_j^3 x + v_j^2/2 + v_j x^2) + \frac{1}{2}[c_1^G(v;x,\beta)]^2.\nonumber
	\end{align}
	As an extension of (\ref{eq:K}), define
	\begin{equation}\label{eq:K1x}
		\mathbb{K}_{n,\beta}[f](x)=-{1\over (2\pi i)^n}\int_{-i\infty}^{i\infty} dv_1\cdots \int_{-i\infty}^{i\infty}dv_\beta \, f(v_1,...,v_n) \prod_{j=1}^\beta e^{v_j^3/3-xv_j}\prod_{1\leq k<l\leq n} \abs{v_k-v_j}^{4/\beta}.
	\end{equation}
	Substituting (\ref{eq:11a}) in (\ref{eq:12.1}), then substituting the result together with the expansion (\ref{eq:12.3}) in (\ref{eq:12.0}) with $x$ replaced by (\ref{eq:12.2a}), we deduce that
	\begin{multline}\label{eq:K2x}
	{1\over \sqrt{2}N^{1/6}} \rho_{N,\beta}^G \left(\sqrt{2N} + \frac{x}{\sqrt{2}N^{1/6}}\right) 
	= \frac{1}{2\pi}\left(\frac{4\pi}{\beta} \right)^{\beta/2} \Gamma(1+\beta/2)
	\prod_{j=2}^\beta {\Gamma(1+2/\beta) \over \Gamma(1+2j/\beta)}
	\\
	\times\mathbb{K}_{\beta,\beta}
	\left[ 1+ \frac{1}{N^{1/3}}c_1^G(v;x,\beta) + \frac{1}{N^{2/3}}c_2^G(v;x,\beta) +O(N^{-1})
	\right](x).
	\end{multline}
	Our task now is to obtain a simplification of $\mathbb{K}_{\beta,\beta}[c_1^G(v;x,\beta)](x)$.
	\begin{proposition}\label{P1}
		We have
		\begin{equation}\label{eq:K3x}
			\mathbb{K}_{\beta,\beta}[c_1^G(v;x,\beta)](x)=\left( \frac{1}{\beta}-\frac{1}{2}\right) \frac{d}{dx}K_{\beta,\beta}(x).
		\end{equation}
	\end{proposition}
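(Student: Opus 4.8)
The plan is to use the linearity of the functional $\mathbb{K}_{\beta,\beta}[\,\cdot\,]$ defined in (\ref{eq:K1x}) together with a small number of exact identities for $\mathbb{K}_{\beta,\beta}$ evaluated on the power sums $p_r(v)=\sum_{j=1}^{\beta}v_j^{r}$. Writing $c_1^G(v;x,\beta)=-\tfrac{3\beta}{4}x^2-\tfrac14 p_4(v)+x\,p_2(v)-p_1(v)$ from (\ref{eq:11b}), the $v$-independent term pulls straight out as $-\tfrac{3\beta}{4}x^2K_{\beta,\beta}(x)$, so the problem reduces to evaluating $\mathbb{K}_{\beta,\beta}[p_1](x)$, $\mathbb{K}_{\beta,\beta}[p_2](x)$ and $\mathbb{K}_{\beta,\beta}[p_4](x)$. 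The first is immediate: differentiating (\ref{eq:K}) under the integral sign brings down a factor $-\sum_j v_j$, so $\tfrac{d}{dx}K_{\beta,\beta}(x)=\mathbb{K}_{\beta,\beta}[-p_1](x)$, i.e. $\mathbb{K}_{\beta,\beta}[p_1](x)=-\tfrac{d}{dx}K_{\beta,\beta}(x)$.

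For $p_2$ and $p_4$ I would integrate by parts in the $v_m$. Let $M=M(v;x)$ be the integrand in (\ref{eq:K}), on the ordered deformed contour of \cite{DF06} where the Vandermonde factor is the genuine analytic function $\prod_{k<l}(v_l-v_k)^{4/\beta}$; then $\partial_{v_m}M=\bigl(v_m^2-x+\tfrac{4}{\beta}\sum_{l\neq m}(v_m-v_l)^{-1}\bigr)M$. Summing over $m$, the cross terms cancel identically since $\sum_m\sum_{l\neq m}(v_m-v_l)^{-1}=0$ (pairing $(m,l)$ with $(l,m)$), and the integral of the total derivative vanishes because the boundary contributions are killed by the $e^{v_j^3/3}$ decay along the steepest-descent directions (justified as in \cite{DF06}); this yields $\mathbb{K}_{\beta,\beta}[p_2](x)=\beta x\,K_{\beta,\beta}(x)$ — the $n=\beta$ analogue of $\Ai''(x)=x\Ai(x)$. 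Repeating the computation with $v_m^2M$ in place of $M$, and using the further pointwise algebraic identity $\sum_m\sum_{l\neq m}v_m^2(v_m-v_l)^{-1}=(\beta-1)p_1(v)$ (each pair contributes $v_m+v_l$), gives $\mathbb{K}_{\beta,\beta}\bigl[2p_1+p_4-x\,p_2+\tfrac{4(\beta-1)}{\beta}p_1\bigr](x)=0$, whence $\mathbb{K}_{\beta,\beta}[p_4](x)=\beta x^2K_{\beta,\beta}(x)+\bigl(6-\tfrac{4}{\beta}\bigr)\tfrac{d}{dx}K_{\beta,\beta}(x)$ after inserting the values just found.

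Assembling the four contributions, the coefficient of $x^2K_{\beta,\beta}(x)$ is $-\tfrac{3\beta}{4}-\tfrac{\beta}{4}+\beta=0$, so those terms cancel, while the coefficient of $\tfrac{d}{dx}K_{\beta,\beta}(x)$ is $-\tfrac14\bigl(6-\tfrac{4}{\beta}\bigr)+1=\tfrac{1}{\beta}-\tfrac12$, which is exactly (\ref{eq:K3x}). The only genuine obstacle is the rigorous justification of the integration by parts: one must confirm that the deformation turning $|v_k-v_l|^{4/\beta}$ into an analytic integrand leaves $\partial_{v_m}$ acting as written with no extra contributions from branch cuts, and that the polynomial insertions $p_r$ do not spoil convergence or create boundary terms. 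Both points are covered by the saddle-point framework already set up in \cite{DF06}, since the insertions are polynomial and the dominant exponential decay of $e^{v_j^3/3}$ along the contour is unaffected.
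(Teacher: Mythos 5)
Your proposal is correct and is essentially the paper's own argument: both rest on integrating the total derivatives $\sum_m\partial_{v_m}M$ and $\sum_m\partial_{v_m}(v_m^2M)$ to zero, cancelling the $\frac{4}{\beta}\sum_{l\neq m}(v_m-v_l)^{-1}$ cross terms by (anti)symmetry and reducing $\sum_{l\neq m}v_m^2(v_m-v_l)^{-1}$ to $(\beta-1)p_1$, which yields exactly the paper's identities (\ref{eq:K4x}), (\ref{eq:K5x}) and (\ref{eq:K6x}). The final bookkeeping and the caveat about justifying the contour deformation via \cite{DF06} also match the paper.
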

	\begin{proof}
		In (\ref{eq:K1x}), let is regard $f$ as an operator acting on all terms in the integrand to the right.
		
		It follows from the fundamental theorem of calculus that
		\begin{equation*}
			0=\mathbb{K}_{n,\beta}\bigg[ \sum_{j=1}^n {\partial \over \partial v_j} \bigg](x).
		\end{equation*}
		On the other hand, by direct differentiation
		\begin{align*}
			\mathbb{K}_{n,\beta}\bigg[ \sum_{j=1}^n {\partial \over \partial v_j} \bigg](x)
			&= \mathbb{K}_{n,\beta}\bigg[ \sum_{j=1}^n v_j^2 \bigg](x)
			-nx K_{n,\beta}(x) + \frac{4}{\beta}\mathbb{K}_{n,\beta}\bigg[ \sum_{\substack{j,k=1\\j\neq k}}^n {1 \over v_j-v_k} \bigg](x)
			\\
			&=\mathbb{K}_{n,\beta}\bigg[ \sum_{j=1}^n v_j^2 \bigg](x)
			-nx K_{n,\beta}(x),
		\end{align*}
		where the second equality follows from the fact that the integrand implied by the final term in the line above is anti-symmetric upon the interchange of any two co-ordinates and thus vanishes. Hence
		\begin{equation}\label{eq:K4x}
			\mathbb{K}_{n,\beta}\bigg[ \sum_{j=1}^n v_j^2 \bigg](x)
			=nx K_{n,\beta}(x).
		\end{equation}
		It similarly follows from the fundamental theorem of calculus that
		\begin{equation*}
		0=\mathbb{K}_{n,\beta}\bigg[ \sum_{j=1}^n {\partial \over \partial v_j} v_j^2\bigg](x),
		\end{equation*}
		while direct differentiation gives
		\begin{equation*}
			0=2\mathbb{K}_{n,\beta}\bigg[ \sum_{j=1}^n v_j \bigg](x)
			+\mathbb{K}_{n,\beta}\bigg[ \sum_{j=1}^n v_j^4 \bigg](x)
			-x \mathbb{K}_{n,\beta}\bigg[ \sum_{j=1}^n v_j^2 \bigg](x)
			+ \frac{4}{\beta}\mathbb{K}_{n,\beta}\bigg[ \sum_{\substack{j,k=1\\j\neq k}}^n {v_j^2 \over v_j-v_k} \bigg](x).
		\end{equation*}
		Taking the arithmetic mean of the final term with itself after interchanging $v_j\leftrightarrow v_k$ shows
		\begin{equation*}
			\mathbb{K}_{n,\beta}\bigg[ \sum_{\substack{j,k=1\\j\neq k}}^n {v_j^2 \over v_j-v_k} \bigg](x)
			= (n-1)\mathbb{K}_{n,\beta}\bigg[ \sum_{j=1}^n v_j\bigg](x).
		\end{equation*}
		Hence
		\begin{equation}\label{eq:K5x}
			\mathbb{K}_{n,\beta}\bigg[ \sum_{j=1}^n v_j^4 \bigg](x)=
			x\mathbb{K}_{n,\beta}\bigg[ \sum_{j=1}^n v_j^2 \bigg](x)
			-\left( {4\over \beta}(n-1)+2\right)\mathbb{K}_{n,\beta}\bigg[ \sum_{j=1}^n v_j \bigg](x).
		\end{equation}
		
		Recalling the definition of $c_1^G$ in (\ref{eq:11b}), it follows from (\ref{eq:K4x}) and (\ref{eq:K5x}) that
		\begin{equation*}
			\mathbb{K}_{\beta,\beta}[c_1^G(v;x,\beta)](x)=\left( {1\over 2}-{1\over \beta} \right)
			\mathbb{K}_{\beta,\beta}\bigg[ \sum_{j=1}^\beta v_j \bigg](x).
		\end{equation*}
		The result (\ref{eq:K3x}) now follows from the fact that
		\begin{equation}\label{eq:K6x}
			\mathbb{K}_{\beta,\beta}\bigg[ \sum_{j=1}^\beta v_j \bigg](x)=-\frac{d}{dx}K_{\beta,\beta}(x).
		\end{equation}
	\end{proof}
	The results (\ref{eq:K2x}), (\ref{eq:K2}) and (\ref{eq:K3x}) tell us that for $\beta$ even
	\begin{multline}\label{eq:14a}
		\frac{1}{\sqrt{2}N^{1/6}}\rho_{N,\beta}^G\left( \sqrt{2N}+\frac{x}{\sqrt{2}N^{1/6}}
		\right)
		=\rho_{\infty,\beta,0}(x) + \frac{1}{N^{1/3}}\left( {1\over \beta}-{1\over 2} \right)\frac{d}{dx}\rho_{\infty,\beta,0}(x)
		\\
		+ \frac{1}{N^{2/3}}r_{\infty,\beta}^G(x) + O(N^{-1}),
	\end{multline}
	where
	\begin{equation}\label{eq:14b}
		r_{\infty,\beta}^{\#}(x)=
		\frac{1}{2\pi}\left(\frac{4\pi}{\beta} \right)^{\beta/2} \Gamma(1+\beta/2)
		\prod_{j=2}^\beta {\Gamma(1+2/\beta) \over \Gamma(1+2j/\beta)}
		\mathbb{K}_{\beta,\beta}[c_2^{\#}(v;x,\beta)](x).
	\end{equation}
	The first statement in Theorem \ref{T1} now follows from (\ref{eq:14a}) by recentring $x$ according to $x\mapsto x+\left( {1\over 2}-{1\over \beta}\right)$ then expanding the first term according to Taylor's theorem up to second order, and the second term up to first order; this process also gives the explicit $O(N^{-2/3})$ correction term as
	\begin{equation}\label{eq:14c}
		\frac{1}{N^{2/3}}\left(
		-\frac{1}{2}\left( {1\over 2}-{1\over \beta}\right)^2\frac{d^2}{dx^2}\rho_{\infty,\beta,0}(x)+r_{\infty,\beta}^G(x)
		\right).
	\end{equation}
	
	\begin{remark}\label{R1} (A.) In the case $\beta = 2$ we have from (\ref{eq:14c}), (\ref{eq:14b}) and (\ref{eq:K1x}) that the correction term 
	(\ref{eq:14c}) is equal to
	$$
	{1 \over N^{2/3}} {1 \over 2} {1 \over (2 \pi)^2}
	\int_{-i \infty}^{i \infty} dv_1 \, e^{v_1^3/3 - x v_1}  \int_{-i \infty}^{i \infty} dv_2 \, e^{v_2^3/3 - x v_2} \,
	c_2^G(v;x,\beta) | v_2 - v_1|^2,
	$$
	where $c_2^G$ is given by  (\ref{eq:11b}). Expanding $|v_2 - v_1|^2$ and making use of (\ref{eq:K1}) gives agreement with
	$\rho_{\infty,\beta = 2, 1}(x)$ as specified in (\ref{7.3}). \\
	(B.) The factor $( {1 \over \beta} - {1 \over 2})$ in (\ref{eq:K3x}) is familiar as a factor in the $O(1/N)$ corrections to the bulk smoothed
	densities in the Gaussian and Laguerre $\beta$ ensembles; see \cite{Jo98} and the recent works \cite{WF14}, \cite{FRW17}.
	\end{remark}
	
	\section{The Laguerre $\beta$ ensemble}\label{S4}
	Introduce the normalisations
	\begin{align*}
		W_{a,\beta,N}&=w_{a,\beta,N}\prod_{j=1}^N {\Gamma(1+j\beta/2)\Gamma(1+(a+j-1)\beta/2) \over \Gamma(1+\beta/2)},
		\quad
		w_{a,\beta,N}=(2/\beta)^{N(a\beta/2+1+\beta(N-1)/2)}
		\\
		M_n(A,B,C) &= \prod_{j=1}^n {\Gamma(1+A+B-C+jC)\Gamma(1+jC)\over \Gamma(1+A-C+jC)\Gamma(1+B-C+jC)\Gamma(1+C)}.
	\end{align*}
	Use the notation CE${}_{\beta,N}^{(c,d)}$ to denote the ensemble of eigen-angles specified by the PDF proportional to
	$$
	\prod_{l=1}^N e^{\pi i \theta_l(c-d)} | 1 + e^{2 \pi i \theta_l} |^{c+d} \prod_{1 \le j < k \le N}
	|e^{2 \pi i \theta_k} - e^{2 \pi i \theta_j}|^\beta, \qquad -{1 \over 2} < \theta_l < {1 \over 2}.
	$$
	Combining \cite[Prop.~13.2.5 \& Exercises 13.1 (4.4)]{Fo10} gives the duality formula
	\begin{equation}\label{Bv}
	{ 1 \over  W_{a+2n/\beta,\beta,N}}
	\Big \langle \prod_{l=1}^N (x_l - x)^n \Big \rangle_{{\rm ME}_{\beta,N}(x^{a\beta/2} e^{-\beta x/2}) }=
	\Big \langle \prod_{l=1}^n e^{-x e^{2 \pi i \theta}} \Big \rangle_{{\rm CE}_{4/\beta,n}^{(a+2/\beta - 1, N - 1)}}.
	\end{equation}
	
According to \cite{DF06} the multi-dimensional integral implied by the RHS of (\ref{Bv}) can be written in the form (\ref{eq:12.1}), 
but with
\begin{equation}\label{eq:L2}
		f(u,x)=4xu-\log u +\log(1-u)+\frac{1}{N}\left( a-2+\frac{2}{\beta} \right)\log(1-u) + \frac{1}{N}\left( \frac{2}{\beta}-2 \right)\log u,
	\end{equation}
	and $\mathcal C$ a counterclockwise closed contour about $u=1$. Furthermore, as in the Gaussian case, by suitably ordering the contours
	the absolute value signs about $|u_k - u_j|$ can be removed and the contours deformed into the complex plane. Given this,
	after recalling (\ref{eq:5.2}), and using the notation  $R_{N,\beta}$ as is specified in terms of $f(u,x)$ and $\mathcal C$ by (\ref{eq:12.1}),
	we have in keeping with \cite[Eq.~(22)]{DF06} that
	\begin{equation}\label{eq:L1}
		\rho_{N,\beta}^{(L,a)}(4Nx)={N\over (2\pi i)^\beta } {W_{a+2,\beta,N-1}\over W_{a,\beta,N}}
		{ (4Nx)^{a\beta/2}e^{-2\beta Nx}\over M_\beta(a+2/\beta-1,N-1,2/\beta) } R_{N,\beta}(x).
	\end{equation}
		
	\subsection*{The case $a = \alpha N$}
	In keeping with the third case of (\ref{eq:7.2}), we want to expand for large $N$ to the first two orders (\ref{eq:L1}) with
	$a = \alpha N$ and
	\begin{equation}\label{eq:L4}
	x\mapsto \frac{1}{4}(\sqrt{1+\alpha}+1)^2+{b^{1/3}x\over 2N^{2/3}}.
	\end{equation}
	
	We begin by noting
	\begin{multline}\label{eq:L5}
		(4Nx)^{\alpha\beta N/2}e^{-2\beta Nx} \, \bigg\rvert_{x\mapsto \frac{1}{4}(\sqrt{1+\alpha}+1)^2+b^{1/3}x/2N^{2/3}}
		\\
		= N^{\alpha\beta N/2}(\sqrt{1+\alpha}+1)^{\alpha\beta N} \exp(-(bN)^{1/3} {2\beta x \over \sqrt{1+\alpha}+1} -\frac{1}{2}\beta N(\sqrt{1+\alpha}+1)^2 )
		\\
		\times \left(
		1 - \frac{1}{N^{1/3}}\frac{\alpha \beta b^{2/3} x^2}{(\sqrt{1+\alpha}+1)^4}
		+ \frac{1}{N^{2/3}} \frac{\alpha^2 \beta^2 b^{4/3} x^4}{2(\sqrt{1+\alpha}+1)^8}+ O(N^{-1})
		\right)
	\end{multline}
	and
	\begin{multline}\label{eq:L6}
		{W_{\alpha N+2,\beta,N-1}\over W_{\alpha N,\beta,N}} {1\over M_\beta(\alpha N+2/\beta-1,N-1,2/\beta) }
		\\
		= {1\over 2\pi} \left( {4\pi \over \beta} \right)^{\beta/2} N^{-2+\beta- \alpha \beta N/2} e^{\alpha\beta N/2} (1+\alpha)^{(-1+\beta -\beta N - \alpha \beta N)/2} 
		\\
		\times \Gamma(1+\beta/2)
		\prod_{j=2}^\beta {\Gamma(1+2/\beta) \over \Gamma(1+2j/\beta)}
		\Big (
		1 + O(N^{-1}) \Big ).
	\end{multline}
	In (\ref{eq:L6}) use was made use of Stirling's formula and the  multiplication formula for the gamma function
	\begin{equation*}
		\prod_{j=0}^{n-1} \Gamma\left( z+\frac{j}{n} \right) = (2\pi)^{(n-1)/2}n^{1/2-nz}\Gamma(nz).
	\end{equation*}
	
	Next, we turn to a saddle point analysis of the integral in (\ref{eq:L1}). After making the substitution (\ref{eq:L4}) in  (\ref{eq:L2})
	we see that up to terms of order $1/N$, the latter quantity is equal to
	$$
	g(u)=(\sqrt{1+\alpha}+1)^2u-\log u + (1+\alpha)\log(1-u).
	$$
	This possesses a double saddle point at $u_0=1/(\sqrt{1+\alpha}+1)$. 
	We expand about this point in the direction of steepest descent by the
	changing of variables $v=-2(bN)^{1/3}(u-u_0)$. This gives
	\begin{multline}\label{eq:L7}
		Nf\left( u_0 - \frac{v}{2(bN)^{1/3}}, {1 \over 4}(\sqrt{1+\alpha}+1)^2+ {b^{1/3}x\over 2N^{2/3}} \right)
		= 
		\\
		N g(u_0)
		+ (bN)^{1/3} {2x \over \sqrt{1+\alpha}+1} 
		+ (2/\beta-2)(\log u_0+\log(1-u_0)) + v^3/3-v x
		\\
		+ {1\over (bN)^{1/3}} {\alpha \over \sqrt{1+\alpha}} \left( 
		v^4/8+ (1-1/\beta)v
		\right)
		+ {1\over (bN)^{2/3}}\bigg[
		\left( \frac{1}{10}\frac{b}{\sqrt{1+\alpha}} + \frac{3}{80}\frac{\alpha^2}{1+\alpha} \right)v^5
		\\
		+ (1-1/\beta) \left( \frac{b}{\sqrt{1+\alpha}} + \frac{1}{8}\frac{\alpha^2}{1+\alpha} \right)v^2
		\bigg] + O(N^{-1}) .
	\end{multline}
	Combining the results (\ref{eq:L5}), (\ref{eq:L6}), (\ref{eq:L7}) with (\ref{eq:L1}) shows
	\begin{multline}\label{eq:L8}
		2(bN)^{1/3} \rho_{N,\beta}^{(L,\alpha N)} \left( N(\sqrt{1+\alpha}+1)^2 + 2b^{1/3}N^{1/3} x \right) 
		= \frac{1}{2\pi}\left(\frac{4\pi}{\beta} \right)^{\beta/2} \Gamma(1+\beta/2)
		\prod_{j=2}^\beta {\Gamma(1+2/\beta) \over \Gamma(1+2j/\beta)}
		\\
		\times\mathbb{K}_{\beta ,\beta}
		\left[ 1+ 
		\frac{1}{N^{1/3}}c_1^{(L,\alpha N)}(v;x,\beta) 
		+ \frac{1}{N^{2/3}}c_2^{(L,\alpha N)}(v;x,\beta) +O(N^{-1})
		\right],
	\end{multline}
	where
	\begin{multline} \label{eq:L9}
	b^{1/3} c_1^{(L,\alpha N)}(v;x,\beta) 
	= -\frac{\alpha \beta x^2}{8\sqrt{1+\alpha}} + \frac{\alpha}{\sqrt{1+\alpha}} \sum_{j=1}^\beta \Big (v_j^4/8 + (1-1/\beta)v_j \Big ),
	\\
	b^{2/3} c_2^{(L,\alpha N)}(v;x,\beta) 
	= \frac{b}{\sqrt{1+\alpha}}\sum_{j=1}^\beta \Big ( v_j^5/10 + (1-1/\beta)v_j^2 \Big )
	+ \frac{\alpha^2}{1+\alpha}\sum_{j=1}^\beta \Big ( 3v_j^5/80 + (1-1/\beta)v_j^2/8 \Big )
	\\
	+ \frac{1}{2}b^{2/3} [c_{1}^{(L,\alpha N)}(v;x,\beta) ]^2.
	\end{multline}
	
	Analogous to the result of Proposition \ref{P1}, the $O(N^{-1/3})$ term in (\ref{eq:L8}) can be identified as being proportional to the
	derivative of the leading term.
	
	\begin{proposition}\label{P2}
		We have
		\begin{equation}
			\mathbb{K}_{\beta,\beta}[c_1^{(L,\alpha N)}(v;x,\beta)](x)
			={1\over 2b^{1/3}} {\alpha \over \sqrt{1+\alpha}} \left( {1\over \beta}-{1\over 2} \right)
			\frac{d}{dx}K_{\beta,\beta}(x). \label{eq:P2b}
		\end{equation}
	\end{proposition}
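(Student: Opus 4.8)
The plan is to reduce $\mathbb{K}_{\beta,\beta}[c_1^{(L,\alpha N)}(v;x,\beta)](x)$ to a scalar multiple of $\mathbb{K}_{\beta,\beta}\big[\sum_{j=1}^\beta v_j\big](x)$ by reusing the integration-by-parts identities already established in the proof of Proposition~\ref{P1}, and then invoking (\ref{eq:K6x}).

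Concretely, I would first factor $\tfrac{\alpha}{\sqrt{1+\alpha}}$ out of the expression (\ref{eq:L9}) for $b^{1/3}c_1^{(L,\alpha N)}$, leaving the linear combination $-\tfrac{\beta}{8}x^2 + \tfrac18\sum_j v_j^4 + (1-\tfrac1\beta)\sum_j v_j$. The piece $-\tfrac{\beta}{8}x^2$, being constant in $v$, contributes $-\tfrac{\beta}{8}x^2 K_{\beta,\beta}(x)$. For the quartic piece I would apply (\ref{eq:K5x}) with $n=\beta$, and then (\ref{eq:K4x}) to eliminate $\mathbb{K}_{\beta,\beta}[\sum_j v_j^2]$, thereby expressing $\mathbb{K}_{\beta,\beta}[\sum_j v_j^4](x)$ in terms of $x^2 K_{\beta,\beta}(x)$ and $\mathbb{K}_{\beta,\beta}[\sum_j v_j](x)$. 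Adding the three contributions, the $x^2 K_{\beta,\beta}(x)$ terms cancel and one is left with a rational-in-$\beta$ multiple of $\mathbb{K}_{\beta,\beta}[\sum_j v_j](x)$; a short computation shows that multiple equals $\tfrac12\big(\tfrac12 - \tfrac1\beta\big)$.

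Reinstating the prefactor $\tfrac{\alpha}{\sqrt{1+\alpha}}$, dividing through by $b^{1/3}$, and using (\ref{eq:K6x}) to replace $\mathbb{K}_{\beta,\beta}[\sum_j v_j](x)$ by $-\tfrac{d}{dx}K_{\beta,\beta}(x)$ then yields (\ref{eq:P2b}). I do not anticipate a genuine obstacle: all the analytic content is already carried by (\ref{eq:K4x}), (\ref{eq:K5x}) and (\ref{eq:K6x}), and the only thing that must come out right is the collapse of the $\beta$-dependent coefficients, which it does. The one point meriting a careful look is the bookkeeping of the powers of $b$ between (\ref{eq:L8}) and (\ref{eq:L9}) — that the $b^{1/3}$ premultiplying $c_1^{(L,\alpha N)}$ in (\ref{eq:L9}) is tracked consistently down to the $\tfrac{1}{2b^{1/3}}$ in (\ref{eq:P2b}) — but this too is routine.
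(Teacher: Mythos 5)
Your proposal is correct and is essentially the paper's own proof: the paper simply states that Proposition~\ref{P2} ``follows directly from (\ref{eq:L9}), (\ref{eq:K4x}), (\ref{eq:K5x}) and (\ref{eq:K6x})'', which is exactly the reduction you carry out, and your coefficient computation $\tfrac18\big(6-\tfrac4\beta\big)$ versus $1-\tfrac1\beta$ collapsing to $\tfrac12\big(\tfrac12-\tfrac1\beta\big)$ checks out, as does the cancellation of the $x^2K_{\beta,\beta}(x)$ terms and the tracking of the $b^{1/3}$ factor.
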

	\begin{proof}
	This follows directly from (\ref{eq:L9}), (\ref{eq:K4x}), (\ref{eq:K5x}) and (\ref{eq:K6x}).
	\end{proof}
	
	Combining (\ref{eq:L8}) and (\ref{eq:P2b}) establishes Theorem 2. In addition, with the notation (\ref{eq:14b}),
	we see that the explicit form of the $O(N^{-2/3})$ correction term is
	\begin{equation}\label{eq:rLaN}
		{1 \over N^{2/3}} \bigg (  - {1 \over 8}  {1 \over b^{2/3}} {\alpha^2 \over 1 + \alpha}
		\Big ( {1 \over 2} - {1 \over \beta} \Big )^2 {d^2 \over d x^2} \rho_{\infty,\beta,0}(x) +
		r_{\infty,\beta}^{(L,\alpha N)}(x) \bigg ).
	\end{equation}
	In the case $\beta = 2$, proceeding as in the Gaussian case of Remark \ref{R1}, we can check that this
	is consistent with $\rho_{\infty,\beta=2,1}^{(L,\alpha N)}$ as specified in (\ref{7.3}).
	
	\subsection*{The case $a $ fixed}
	We fix $a$ and make the replacement
	\begin{equation}\label{eq:a1}
		x \mapsto 1+ {x\over (2N)^{2/3}}.
	\end{equation}
	A simple calculation shows 
	\begin{equation}\label{eq:a2}
		(4Nx)^{a\beta/2} e^{-2\beta Nx}\,\bigg\rvert_{x \mapsto 1+x/(2N)^{2/3}} = (4N)^{a\beta/2}
		e^{-2\beta N - \beta(2N)^{1/3}x}
		\left(
			1+ {1\over N^{2/3}} {2^{1/3}\over 4} a\beta x + O(N^{-4/3})
			\right).
	\end{equation}
	Further, by use of Stirling's formula and the multiplication formula for the gamma function we deduce
	\begin{multline}\label{eq:a3}
		{W_{a+2,\beta,N-1}\over W_{a,\beta,N}} {1\over M_\beta(a+2/\beta-1,N-1,2/\beta) }
		\\
		= {1\over 2\pi} \left( {4\pi \over \beta} \right)^{\beta/2} N^{-2+\beta- a \beta /2} \Gamma(1+\beta/2)
		\prod_{j=2}^\beta {\Gamma(1+2/\beta) \over \Gamma(1+2j/\beta)}
		\Big (
		1 + O(N^{-1}) \Big ).
	\end{multline}
	
	With $a$ fixed we read of that the leading large $N$ term of (\ref{eq:L2}) is
		$$
		g(u)=4u-\log u +\log(1-u).
	$$
	 This has a double saddle point at $u_0=1/2$. Expanding around this point in the direction of the steepest descent by the change of variables $v =-2(2N)^{1/3}(u-u_0)$ then shows
	\begin{multline}\label{eq:a4}
		Nf(1/2-v/2(2N)^{1/3},1+x/(2N)^{2/3})
		= 2N+(2N)^{1/3}x+v^3/3-v x + (4-4/\beta-a)\log 2
		\\
		+ {av\over (2N)^{1/3}} + {2^{1/3}\over N^{2/3}} \left(  
		v^5/10 + (4-4/\beta-a)v^2/4 
		\right) + O(N^{-1}).
	\end{multline}
	
	By combining (\ref{eq:a2}), (\ref{eq:a3}), (\ref{eq:a4}) with (\ref{eq:L1}), for the Laguerre $\beta$-ensemble with $a$ fixed, the large $N$ expansion of the soft edge scaled density as defined in the variable (\ref{eq:a1}) is given by
	\begin{multline}\label{eq:L10a}
	2(2N)^{1/3} \rho_{N,\beta}^{(L,a)}\left(4N + 2(2N)^{1/3}x \right) 
		= \frac{1}{2\pi}\left(\frac{4\pi}{\beta} \right)^{\beta/2} \Gamma(1+\beta/2)
		\prod_{j=2}^\beta {\Gamma(1+2/\beta) \over \Gamma(1+2j/\beta)}
		\\
		\times \mathbb{K}_{\beta,\beta}
		\left[ 1+ \frac{1}{N^{1/3}}c_1^{(L,a)}(v;x,\beta)+\frac{1}{N^{2/3}}c_2^{(L,a)\ast}(v;x,\beta) +O(N^{-1})
		\right],
	\end{multline}
	where
	\begin{align}
		c_1^{(L,a)}(v;x,\beta) &=  {a\over 2^{1/3}}\sum_{j=1}^\beta v_j  \nonumber
		\\
		c_2^{(L,a)\ast}(v;x,\beta) &=  2^{1/3}\sum_{j=1}^\beta \left( v_j^5/10 + (1-1/\beta)v_j^2 \right) + {1\over 2} [c_1^{(L,a)}(v;x,\beta)]^2.
		\label{eq:L10b}
	\end{align}
	
	Immediate from (\ref{eq:L10b}) and (\ref{eq:K6x}) is that
	\begin{equation}\label{eq:LK}
		\mathbb{K}_{\beta,\beta}[c_1^{(L,a)}(v;x,\beta)](x)=- {a\over 2^{1/3}}\frac{d}{dx}K_{\beta,\beta}(x).
	\end{equation}
	Applying the translation $x\mapsto x+a/(2N)^{1/3}$ to (\ref{eq:L10a}) then shows that the scaled soft edge density in the variable listed second in 
	(\ref{eq:7.2}), up to order $O(N^{-2/3})$, is given by
	\begin{multline}\label{eq:L10}
	2(2N)^{1/3} \rho_{N,\beta}^{(L,a)}\left(4N + 2a + 2(2N)^{1/3}x \right) 
	= \frac{1}{2\pi}\left(\frac{4\pi}{\beta} \right)^{\beta/2} \Gamma(1+\beta/2)
	\prod_{j=2}^\beta {\Gamma(1+2/\beta) \over \Gamma(1+2j/\beta)}
	\\
	\times \mathbb{K}_{\beta,\beta}
	\left[ 1+ \frac{1}{N^{2/3}}c_2^{(L,a)}(v;x,\beta) +O(N^{-1})
	\right],
	\end{multline}
	where
	\begin{equation}
	c_2^{(L,a)}(v;x,\beta) =  2^{1/3}\sum_{j=1}^\beta \left( v_j^5/10 + (1-1/\beta)v_j^2 \right) . \label{eq:L11}
	\end{equation}
	This corroborates with Theorem 1 in the Laguerre case, and furthermore gives the leading correction as
	\begin{equation}\label{eq:rLa}
		{1 \over N^{2/3}} r_{\infty,\beta}^{(L,a)}(x), 
	\end{equation}
	where $r_{\infty,\beta}^{(L,a)}$ is specified according to (\ref{eq:14b}) with the substitution (\ref{eq:L11}). Note that this correction is independent of $a$.
	In the case $\beta = 2$ this latter property is already evident from the appropriate case of (\ref{7.3}). The procedure of Remark \ref{R1} can be
	used to reduce (\ref{eq:rLa}) with $\beta = 2$ to this functional form.

\section{Numerics}\label{S5}
         In this section we provide a method to determine the graphical form of the optimal $O(N^{-2/3})$ correction term for 
         the soft edge density of the Gaussian and Laguerre even $\beta$-ensembles. Multiple integral forms have been given in 
         (\ref{eq:14c}), (\ref{eq:rLaN}) and (\ref{eq:rLa}). However these are not well suited to accurate numerical evaluation.
         Instead we turn to known \cite{Fo93,FI10a,FR12} recursive properties relating to the average in (\ref{eq:5.2}), recalling that for
         $\beta$ even the latter is a polynomial of degree $\beta (N-1)$. As such we are making use of the broader theory of the
         Selberg integral; see \cite[Ch.~4]{Fo10}.
         
        For fixed parameters $(\lambda_1,\lambda_2,\lambda,\alpha)$, introduce the
         auxiliary function $I_p[w(t)](x) = I_p[w(t)](x;\alpha,\lambda)$ as the multiple integral
	\begin{multline}\label{eq:N1}
		I_p[w(t)](x) = {p!(N-p)!\over N!} \int_{\mathbb R} dt_1 \cdots \int_{\mathbb R} dt_N \, \prod_{l=1}^N w(t_l) \abs{x-t_l}^{\alpha-1} 
		\\
		\times \prod_{1\leq j<k\leq N} \abs{t_k-t_j}^{2\lambda} e_p(x-t_1,\ldots,x-t_N),
	\end{multline}
         where 
         \begin{equation*}
		e_p(t_1,\ldots,t_n) = \sum_{1\leq j_1<\ldots <j_p\leq n} t_{j_1} \cdots \, t_{j_p}
	\end{equation*}
	denotes the elementary symmetric polynomials. In the case $w(t) = t^{\lambda_1} (1 - t)^{\lambda_2} \chi_{0 < t < 1}$
	we know from \cite{Fo93} that these integrals satisfy the  differential-difference equation
	\begin{equation}\label{DDE}
		(N-p)E_p I_{p+1}(x) = (A_p x+B_p)I_p(x) - x(x-1) \frac{d}{dx}I_p(x) + D_p x(x-1)I_{p-1}(x),
	\end{equation}
	where
	\begin{align*}
		A_p&=(N-p)(\lambda_1+\lambda_2+2\lambda(N-p-1)+2\alpha)
		\\
		B_p&=(p-N)(\lambda_1+\alpha+\lambda(N-p-1))
		\\
		D_p&=p(\lambda(N-p)+\alpha)
		\\
		E_p&=\lambda_1+\lambda_2 + 1+ \lambda (2N-p-2)+\alpha.
	\end{align*}
	The immediate relevance of (\ref{eq:N1}) to the average (\ref{eq:5.2}) comes from the fact that
	\begin{equation*}
		I_N[ t^{\lambda_1} (1 - t)^{\lambda_2} \chi_{0 < t < 1}](x) \big\rvert_{\alpha=\beta,\lambda=\beta/2} \propto
		\left\langle
		\prod_{l=1}^N (x-x_1)^\beta
		\right\rangle_N^{(J)},
	\end{equation*}
	where the superscript $(J)$ indicates use of the Jacobi weight in (\ref{eq:5.2}).  In the case $\alpha = 1$, $p=0$, (\ref{eq:N1}) is
	independent of $x$, which provides an initial condition to iterate (\ref{DDE}) up to $p=N$. The crucial property
	\begin{equation*}
		I_N[w(t)](x) = I_0[w(t)](x)\bigg\rvert_{\alpha \to \alpha +1}
	\end{equation*}
	then gives the value of the polynomial in the case $\alpha = 2$, $p=0$, and the iteration can be continued.
	
	Introduce the notation
	$$
	G_p(x) = I_p[e^{-\beta t^2/2}](x), \qquad L_p(x) = I_p[t^{\beta a/2} e^{- \beta t/2} \chi_{t>0}](x)
	$$
	for the Gaussian and Laguerre cases of (\ref{eq:N1}). A simple change of variables shows 
	\begin{align}
		G_p(x) &= (-1)^p    \lim_{n\to\infty} 2^{\lambda_1 N} (2n)^{N\alpha + \lambda N(N-1) + p} I_p\left({1 \over 2} -{x\over 2n}\right)\bigg\rvert_{\lambda_1=\lambda_2= \lambda n^2}
		\\
		L_p(x) &= \lim_{n\to\infty} n^{N(\lambda_1 + \alpha) + \lambda N(N-1) + p} I_p(x/n)\bigg\rvert_{\lambda_1=\lambda a,\lambda_2= \lambda n}.
	\end{align}
	Applying this limiting procedure to (\ref{DDE}) gives
	\begin{equation}\label{eq:R1}
		\lambda (N-p)G_{p+1}(x) = \lambda (N-p)x G_p(x) + {1\over 2}\frac{d}{dx}G_p(x) - {p(\lambda(N-p)+\alpha)\over 2}G_{p-1}(x)
	\end{equation}
	and
	\begin{equation}\label{eq:R2}
		\lambda (N-p)L_{p+1}(x) = (\lambda (N-p)x+B_p)L_p(x) + x\frac{d}{dx}L_p(x) -D_p xL_{p-1}(x),
	\end{equation}
	where $\lambda = \beta/2$ and $\lambda_1 = a \beta/ 2$.
The constant (i.e.~$x$ independent) values for $\alpha = 1$, $p=0$ can be read off from the normalisations appearing in
(\ref{eq:12.0}) and (\ref{Bv}), which being limiting cases of the Selberg integral can be expressed in terms of products of
gamma functions. Thus $G_0 \big\rvert_{\alpha=1} = G_{2 \lambda,N}$ and $L_0 \big\rvert_{\alpha=1} = W_{a,2 \lambda, N}$.

	Figures \ref{Fig1}, \ref{Fig1a}, \ref{Fig1b} and \ref{Fig2} were generated by substituting the appropriate soft edge variables
	\begin{equation}\label{eq:S1}
		s_{x,\beta}=
		\begin{cases}
		\sqrt{2N} + \frac{1}{\sqrt{2N}}\left(\frac{1}{2}-\frac{1}{\beta}  \right)+{x\over \sqrt{2}N^{1/6}},\quad &\text{Gaussian}
		\\
		4N+2a+2(2N)^{1/3}x,\quad &\text{Laguerre}
		\\
		N(\sqrt{1+\alpha}+1)^2 + {\alpha\over \sqrt{1+\alpha}} \left(\frac{1}{2}-\frac{1}{\beta}  \right) + 2(bN)^{1/3}x,
		 \quad &\text{Laguerre } a=\alpha N
		\end{cases}
	\end{equation}
	into (\ref{eq:R1}) and (\ref{eq:R2}). With $s_{x,\beta}$, $s_x'$ defined by (\ref{eq:S1}), (\ref{eq:7.2}) (with the exception for the Laguerre case with $a$ fixed, where we take instead $s_x'=4N+2(2N)^{1/3}x$), Figures \ref{Fig1}, \ref{Fig1a} and \ref{Fig1b} plot
	\begin{equation}\label{eq:F1}
		\left(
		{1\over N_2^{2/3}} - {1\over N_1^{2/3}}
		\right)^{-1}
		\left( 
		{\partial s_x \over \partial x}\bigg\rvert_{N=N_1} \rho_{N_1,\beta}^{\#}(s_{x,\beta}) 
		-
		{\partial s_x \over \partial x}\bigg\rvert_{N=N_2} \rho_{N_2,\beta}^{\#}(s_{x,\beta}) \right).
	\end{equation}
	For some large $N_1,N_2 \in \mathbb{N}$ values, the successive differences given by (\ref{eq:F1}) admit, as seen graphically, the same functional form and are $O(1)$. 
	(\ref{eq:F1}) plots the corrections (\ref{eq:14c}), (\ref{eq:rLa}), (\ref{eq:rLaN}) up to some error, to leading order, proportional to
	$
	\left(
	{1\over N_2^{2/3}} - {1\over N_1^{2/3}}
	\right)^{-1}
	\left(
	{1\over N_2} - {1\over N_1}
	\right).
	$
	Figure \ref{Fig2} plots
	\begin{equation}\label{eq:F2}
		{N^{1/3} \over k^\#(\beta)} 
		{\partial s_x\over \partial x}
		\left( \rho_{N,\beta}^{\#}(s_{x}') - \rho_{N,\beta}^{\#}(s_{x,\beta}) \right)
	\end{equation}
	where
	$$
		k^\#(\beta) = 
		\begin{cases}
		\frac{1}{2}-\frac{1}{\beta} , \quad &\text{Gaussian}
		\\
		{a\over 2^{1/3}}, \quad &\text{Laguerre}
		\\
		{1\over 2b^{1/3}} {\alpha\over \sqrt{1+\alpha}} \left(\frac{1}{2}-\frac{1}{\beta} \right) \quad &\text{Laguerre } a=\alpha N
		\end{cases}
	$$
	and shows the approximate functional form of the correctional term of order $N^{-1/3}$ due to the scaling (\ref{eq:7.2}) which bears resemblance to the derivative of (\ref{eq:5.2}). This further emphasises the results found in Proposition \ref{P1}, \ref{P2} and (\ref{eq:LK}). The small displacement relative to the derivative in the displayed examples -- which we take as a finite N artifact –- is not a feature of the previous figures (right captions) due to the fundamentally different nature of what is being plotted.

	\begin{figure}[H]
	\centering
	\includegraphics[width=7cm]{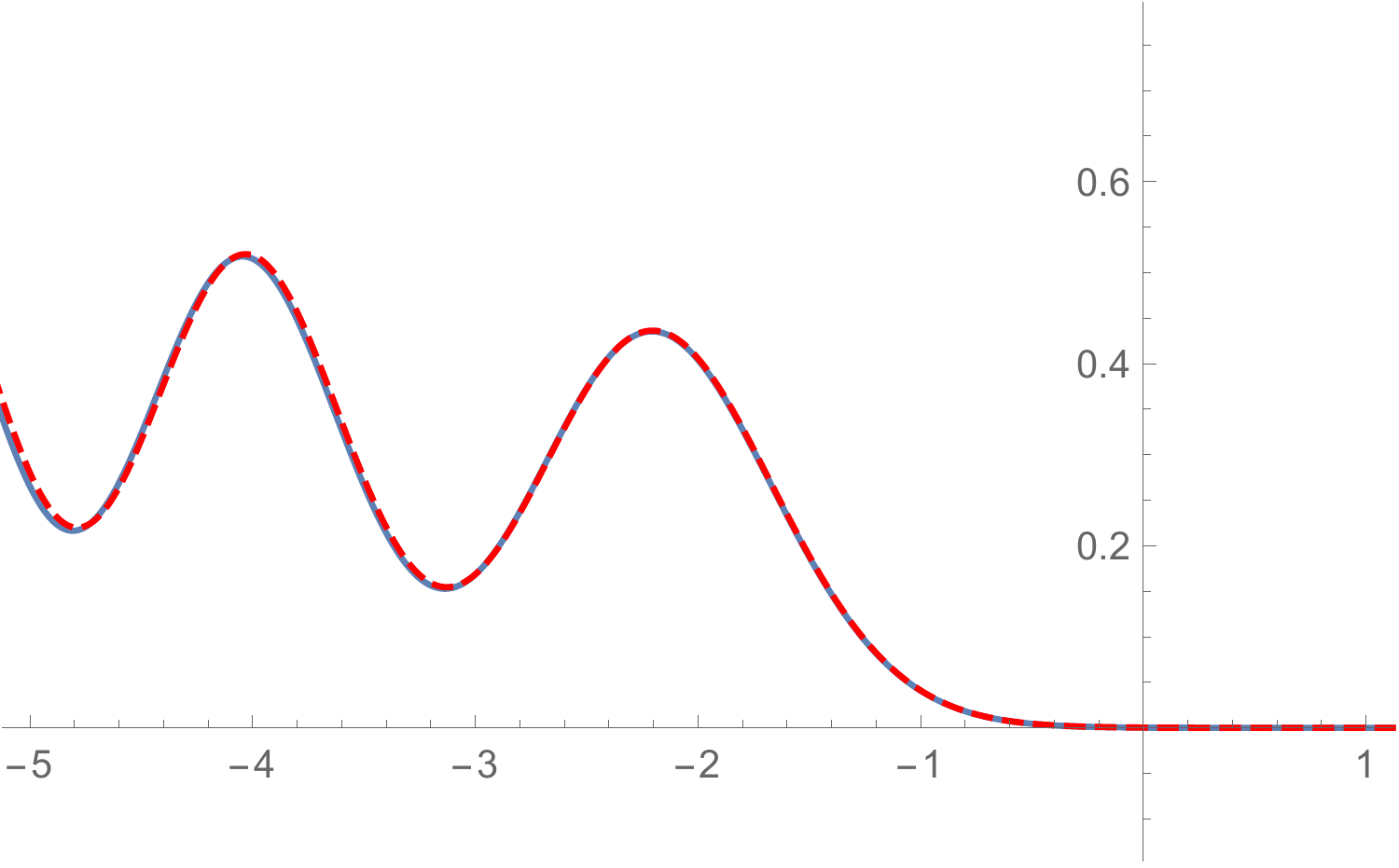}
	\includegraphics[width=7cm]{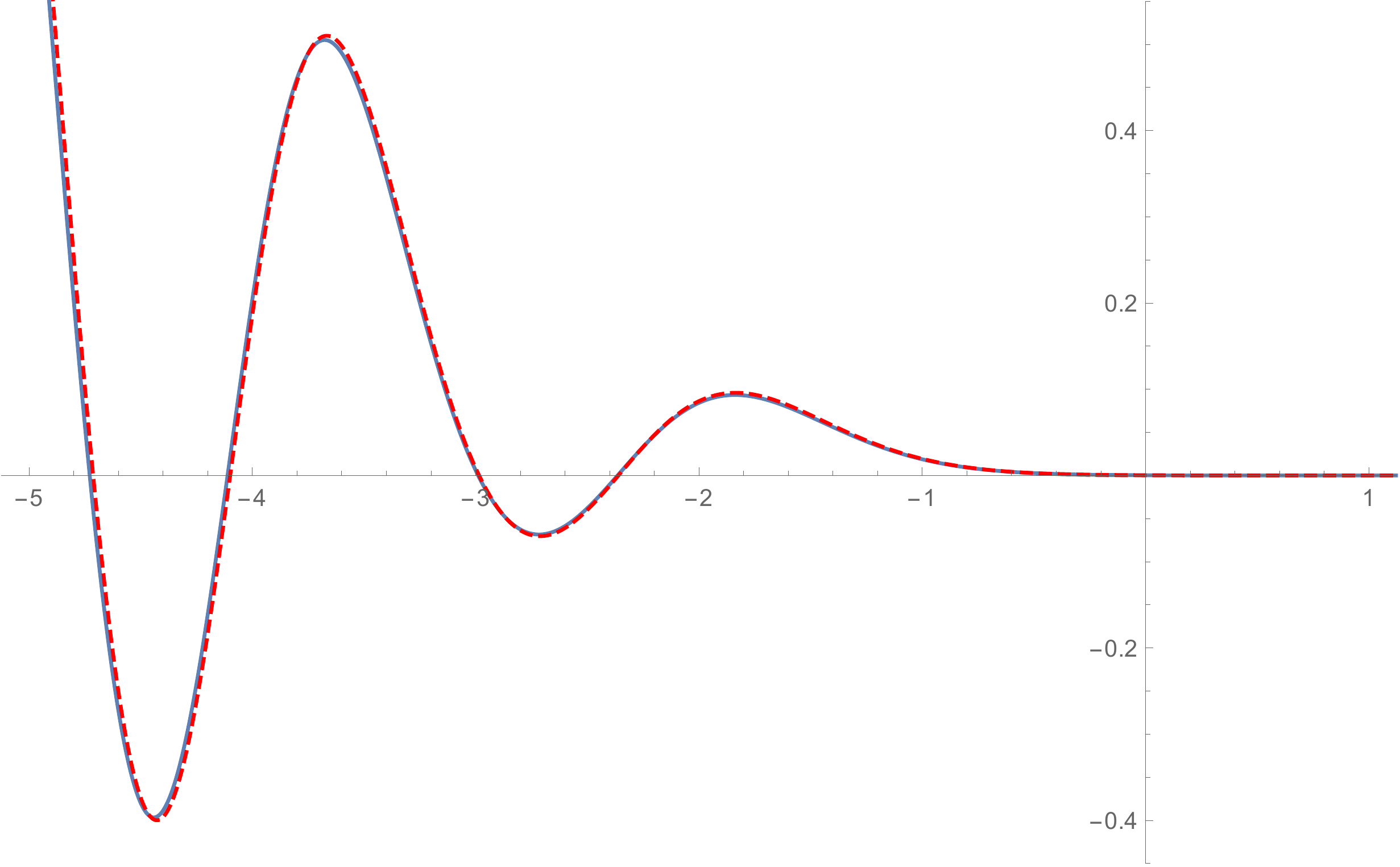}
	\caption{Left: $1/(\sqrt{2} N^{1/6})\rho_{N,\beta}^{(G)}(s_{x,\beta})$ with values $\beta=6$, $N=30$ (blue solid line) and $40$ (red dashed line).
		Right: the difference (\ref{eq:F1}) with values $\beta=6$, $(N_1,N_2)=(30,40)$ (blue solid line) and $(N_1,N_2)=(40,50)$ (red dashed line).
	}
	\label{Fig1}
	\end{figure}

	\begin{figure}[H]
	\centering
	\includegraphics[width=7cm]{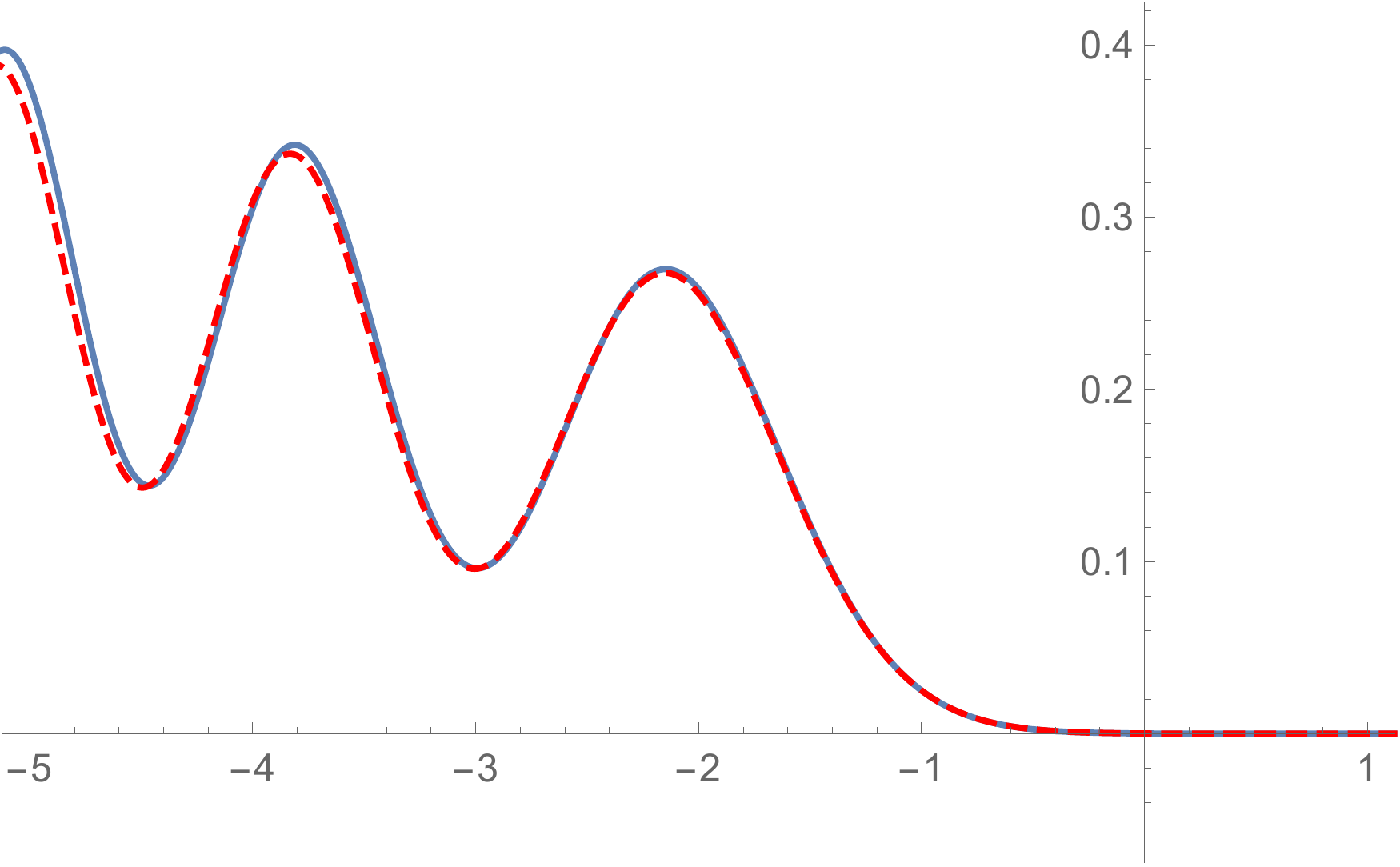}
	\includegraphics[width=7cm]{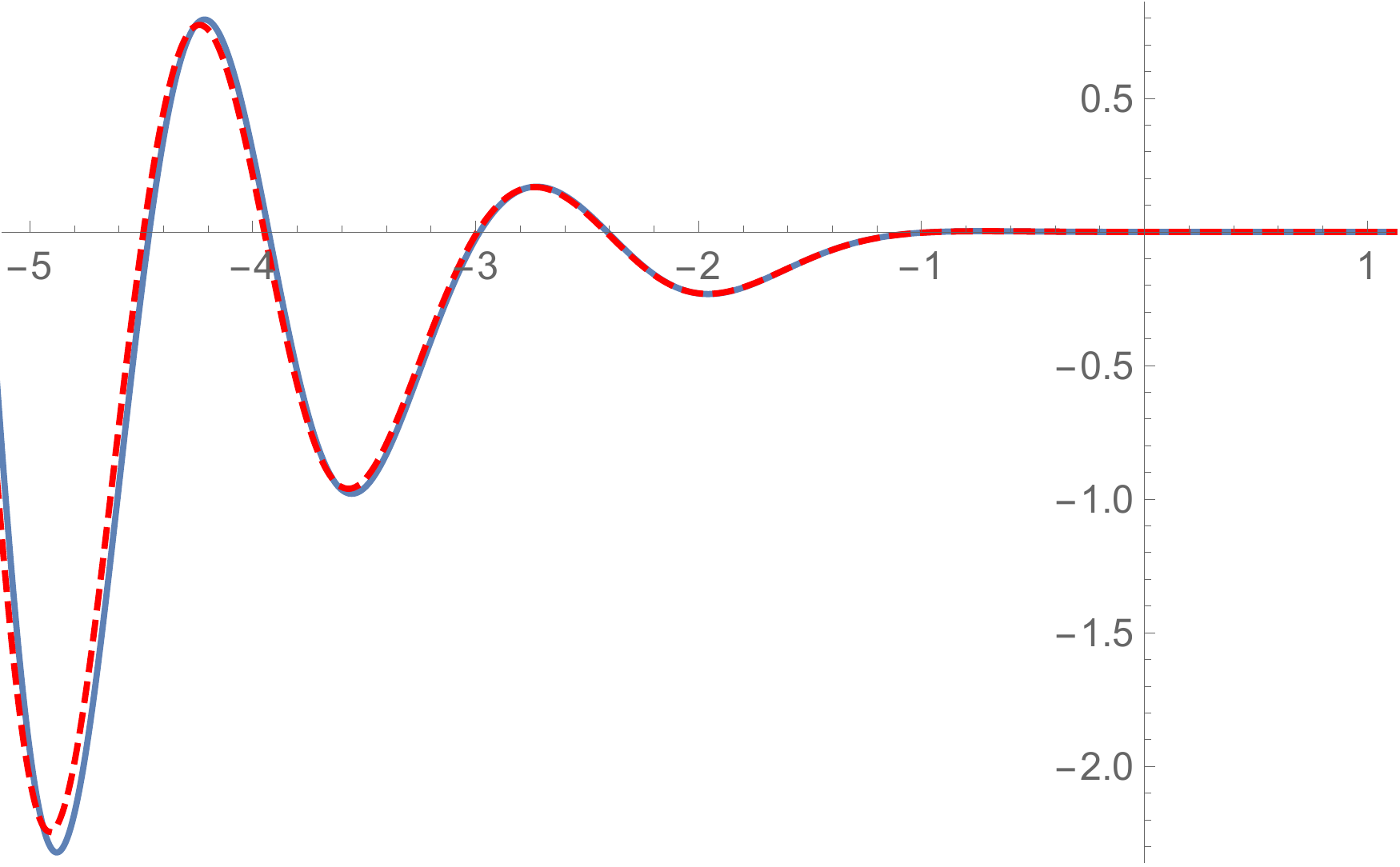}
	\caption{Left: $2(2N)^{1/3} \rho_{N,\beta}^{(L,a)}(s_{x,\beta})$ with values $\beta=6$, $a=0.5$, $N=40$ (blue solid line) and $50$ (red dashed line).
		Right: the difference (\ref{eq:F1}) with values $\beta=6$, $a=0.5$, $(N_1,N_2)=(40,50)$ (blue solid line) and $(N_1,N_2)=(50,60)$ (red dashed line).
	}
	\label{Fig1a}
	\end{figure}

	\begin{figure}[H]
	\centering
	\includegraphics[width=7cm]{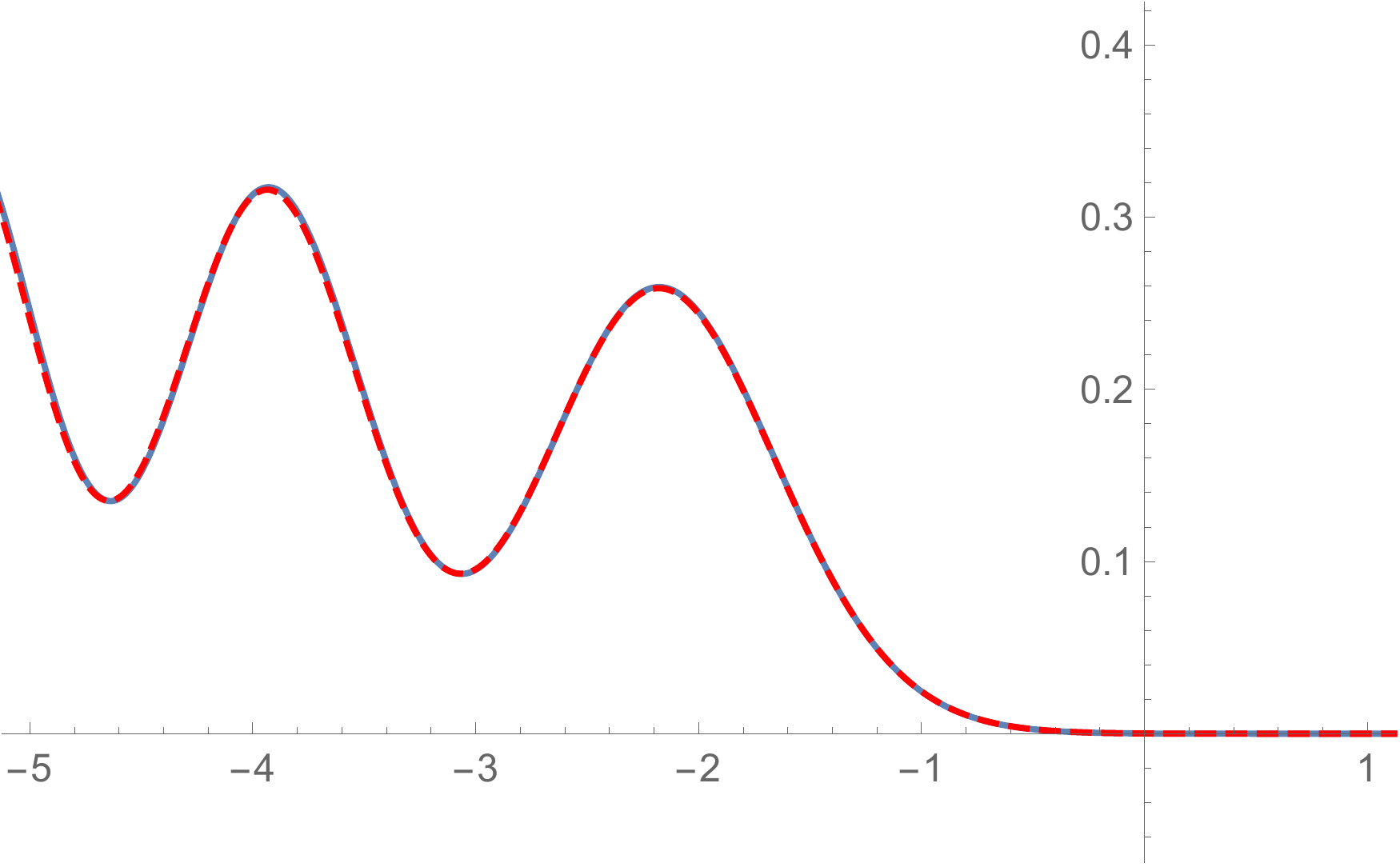}
	\includegraphics[width=7cm]{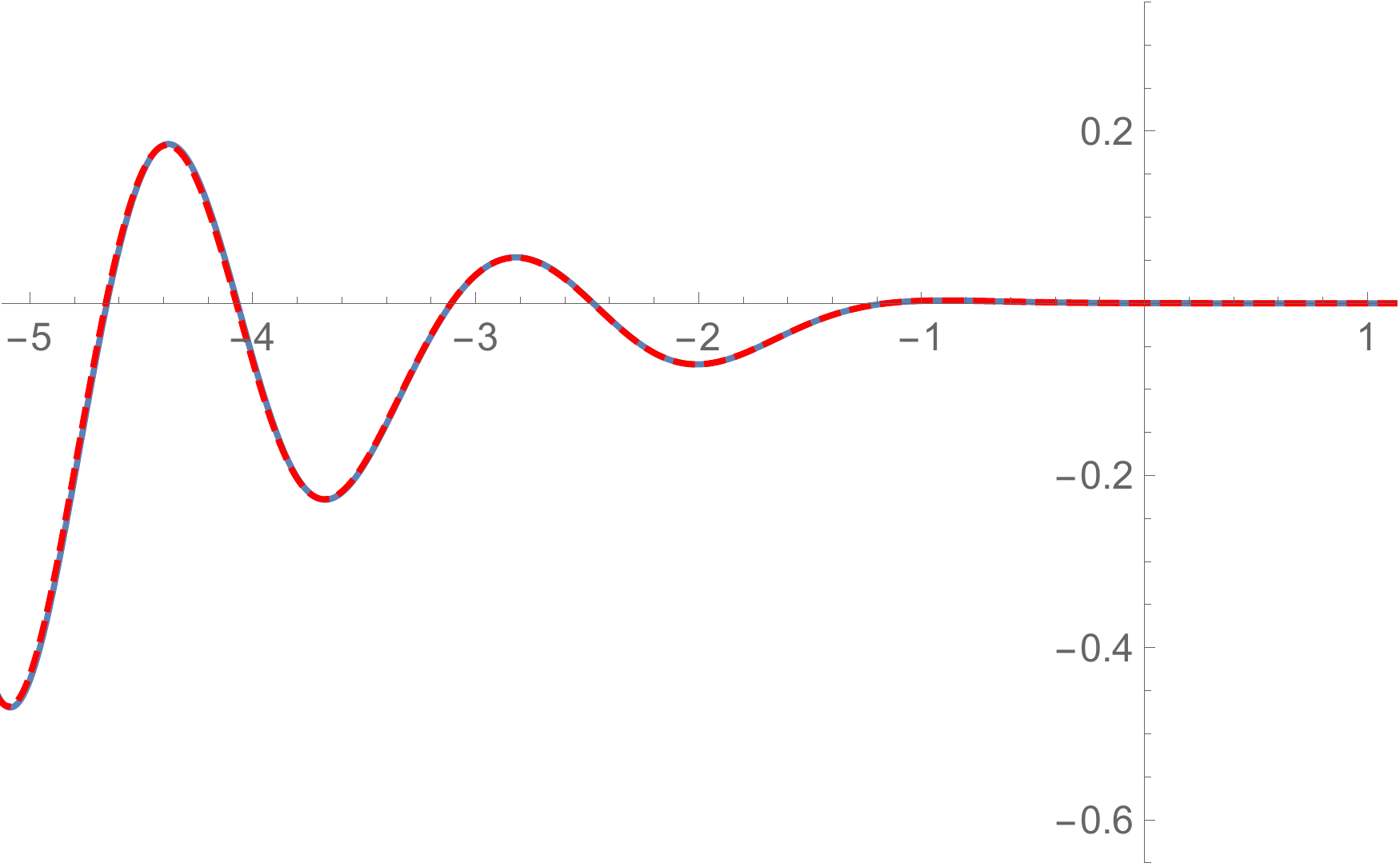}
	\caption{Left: $2(bN)^{1/3} \rho_{N,\beta}^{(L,\alpha N)}(s_{x,\beta})$ with values $\beta=6$, $\alpha=10$, $N=40$ (blue solid line) and $50$ (red dashed line).
		Right: the difference (\ref{eq:F1}) with values $\beta=6$, $\alpha=10$, $(N_1,N_2)=(40,50)$ (blue solid line) and $(N_1,N_2)=(50,60)$ (red dashed line).
	}
	\label{Fig1b}
\end{figure}

	\begin{figure}[H]
	\centering
	\includegraphics[width=7cm]{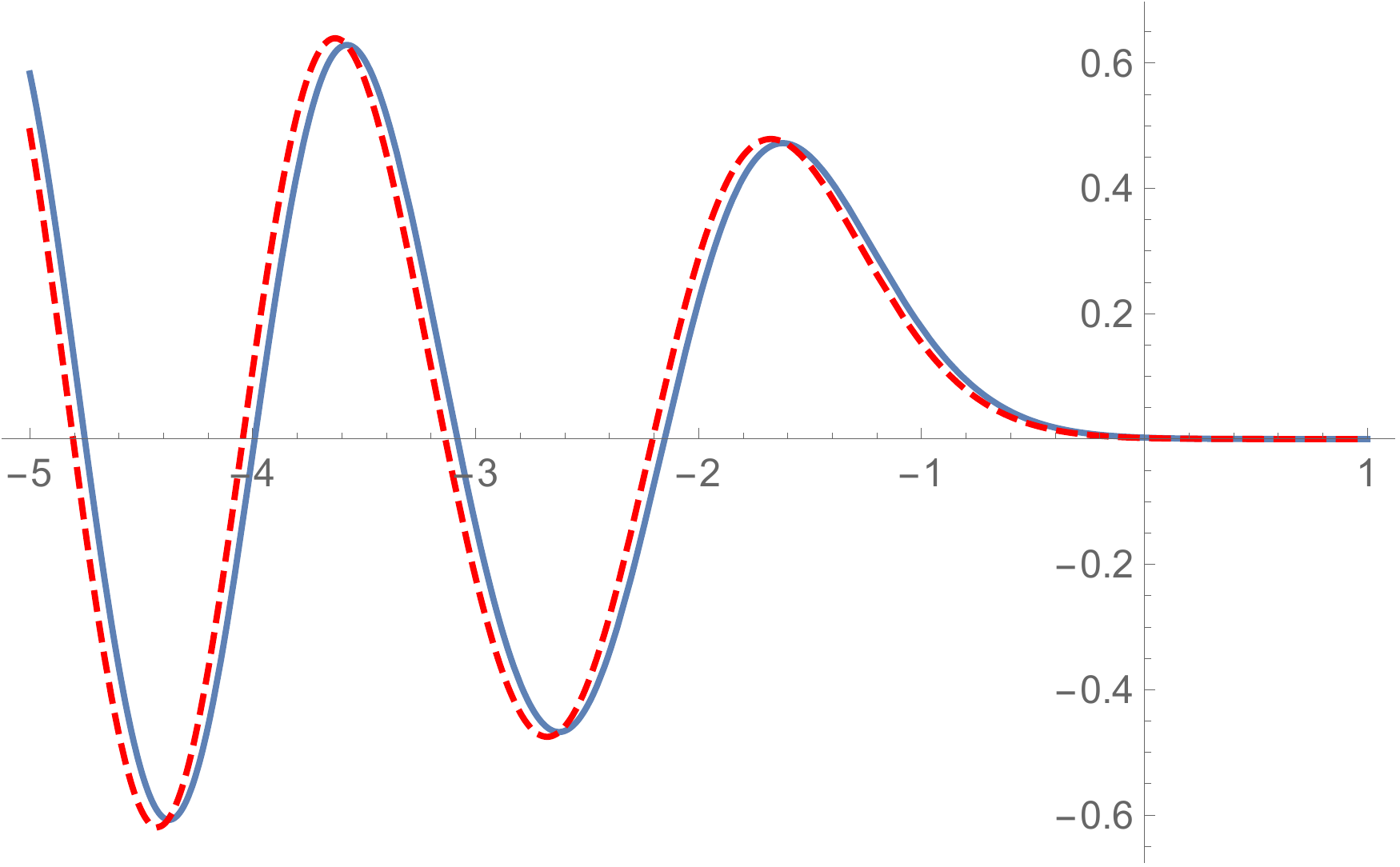}
	\includegraphics[width=7cm]{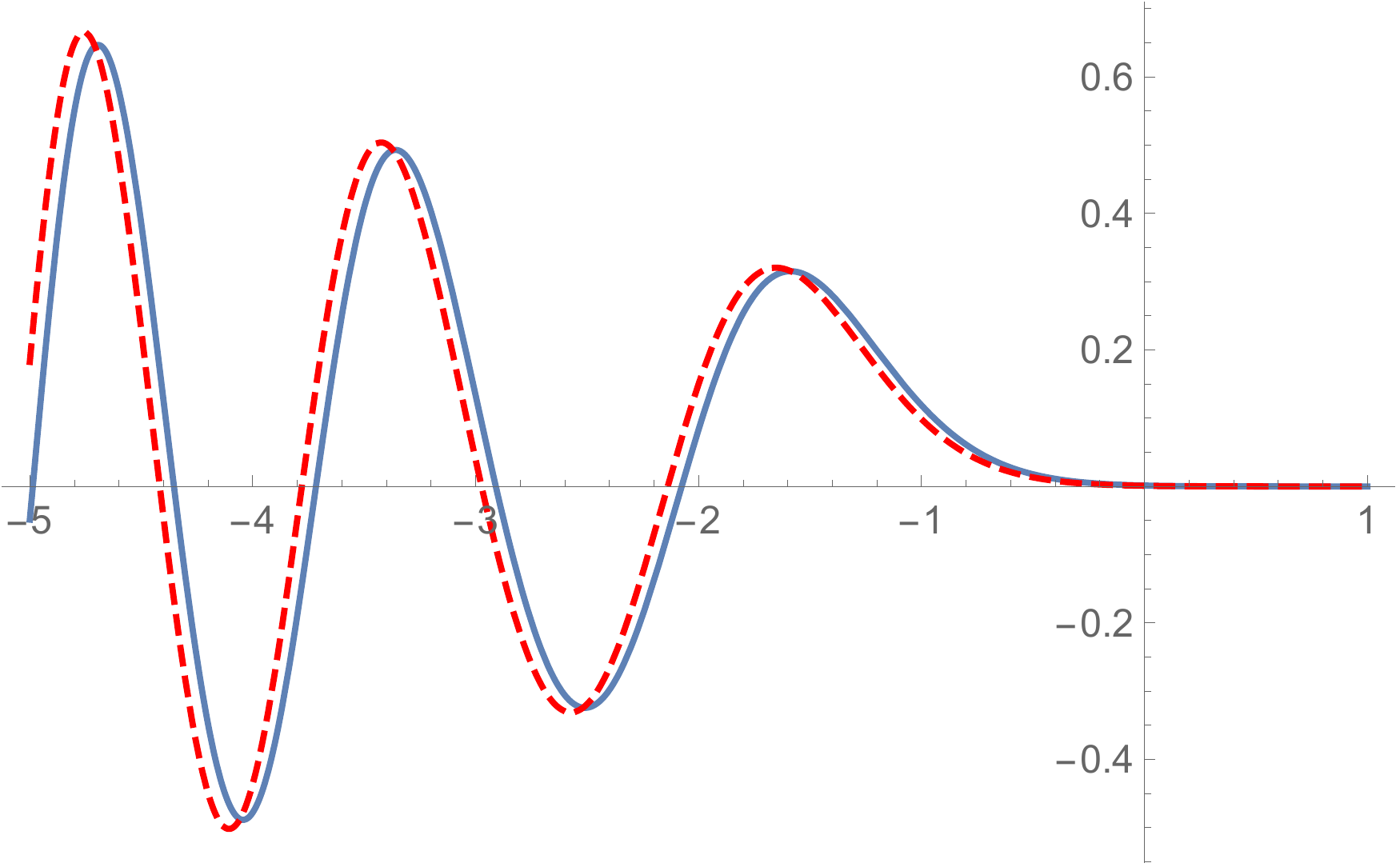}
	\includegraphics[width=7cm]{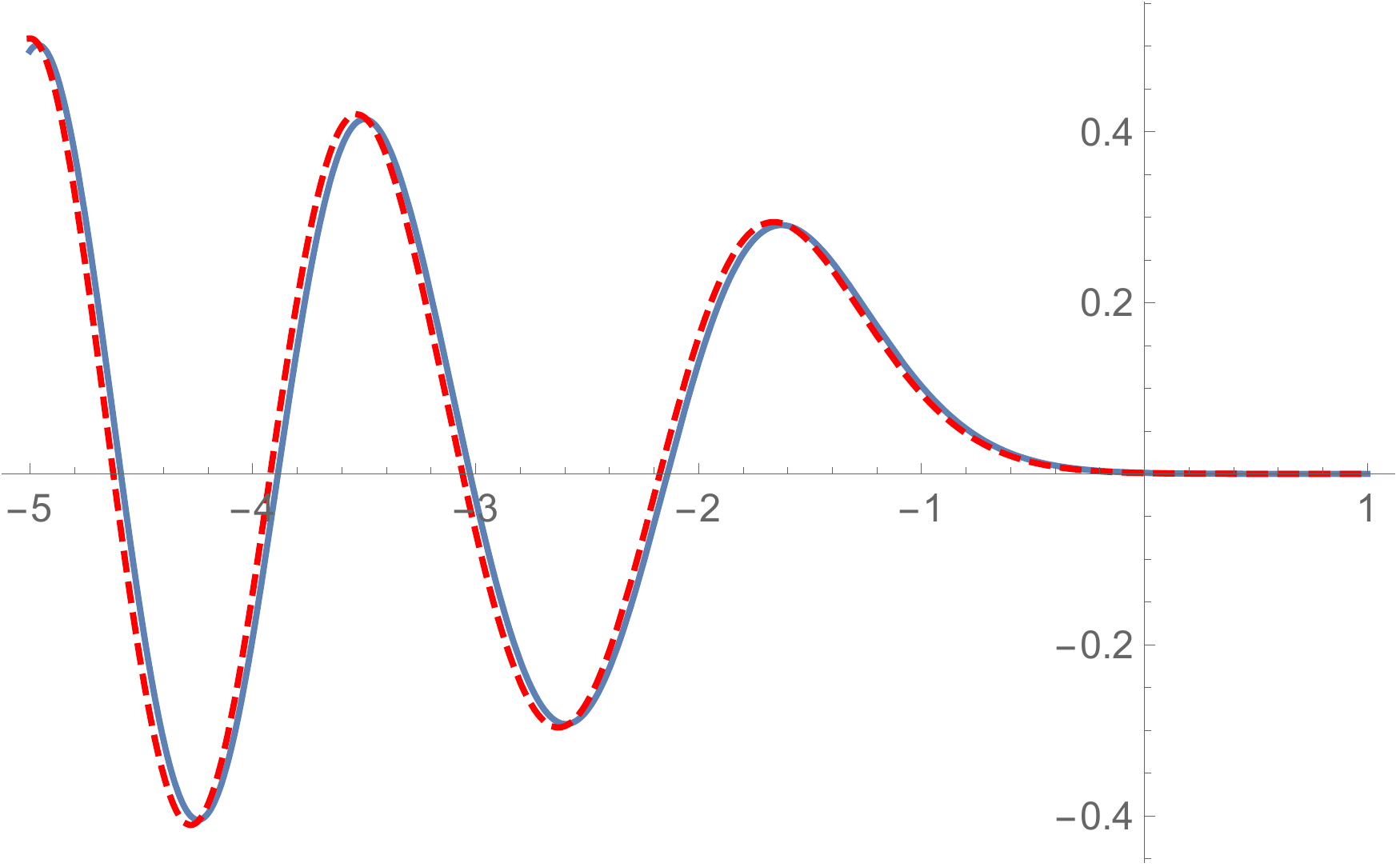}
	\caption{The $O(N^{-1/3})$ correction (\ref{eq:F2}) (blue solid line) and $-{\partial s_x \over \partial x} \frac{d}{dx}\rho_{N,\beta}^{\#}(s_{x,\beta})$ (red dashed line) with values $\beta=6$ and $N=30$. First Row (from left to right): Gaussian case, Laguerre case with $a=0.5$ fixed. Second Row: Laguerre case with $a=\alpha N$ and $\alpha=10$.
	}
	\label{Fig2}
	\end{figure}
	
	\section{More general invariant ensembles}
	In our previous study \cite{FT18}, the question of the optimal soft edge scaling for complex Hermitian Wigner matrices was briefly addressed.
	We recall that an Hermitian   random matrix is termed a Wigner ensemble if all its diagonal entries (which must be real) are independently
	chosen from the same zero mean, finite variance distribution, and all its upper triangular entries (which may be complex)
	chosen similarly. The Gaussian unitary ensemble, corresponding to the Gaussian case of \eqref{eq:4.1} with $\beta = 2$, is both a unitary
	invariant ensemble, and a complex Wigner matrix: matrices from this ensemble can be generated by choosing the diagonal entries from
	the normal distribution N$[0,1/\sqrt{2}]$ and the upper triangular entries from N$[0,1/2] + i {\rm N}[0,1/2]$. In \cite{FT18} numerical simulations were     performed on the particular complex Wigner matrices $Y = {1 \over 2} (X + X^\dagger)$, where all entries of $X$ are chosen independently and
	uniformly from the set of four values ${1 \over \sqrt{2}}(\pm 1 \pm i)$. The off diagonal entries of $Y$ then have mean zero and variance one half,
	as for the GUE, guaranteeing (see e.g.~\cite{PS11}) that to leading order the largest eigenvalue occurs at $\sqrt{2N}$. By the use of simulation,
	evidence was presented that a scaled and centred variable can be identified such that the leading correction to the limiting distribution
	of the soft edge scaled largest eigenvalue is $O(N^{-2/3})$. Of course the natural question is to ask if this effect holds more generally.
	
	What then for the case of general invariant ensembles beyond the classical cases? In the case $\beta = 2$, and for the family of weights $w_2(x) =
	e^{-N x^{2m}}$, as noted in \cite[Eq.~(1.9)]{KSSV14}, results of \cite{DG07} give the asymptotic formula for the soft edge scaled
	correlation kernel (\ref{eq:7.1})
	\begin{equation}
	{ 1 \over N^{2/3} \gamma} K \Big ( b + {s \over N^{2/3} \gamma}, 	 b + {t \over N^{2/3} \gamma} \Big ) =
	 { \Ai(s)\Ai'(t)-\Ai'(s)\Ai(t)\over s-t } + O(N^{-2/3})O(e^{-C(s+t)})
	 \end{equation}
	 where $b,\gamma$ depend on $m$. Thus once again there is a choice of scaling variables such that the correction to the correlations at the soft
	 edge is $O(N^{-2/3})$.

\section*{Acknowledgements}
This work is part of a research program supported by the Australian Research Council (ARC) through the ARC Centre of Excellence for Mathematical and Statistical frontiers (ACEMS). PJF also acknowledges partial support from ARC grant DP170102028, and AKT acknowledges the support of a Melbourne postgraduate award.


\providecommand{\bysame}{\leavevmode\hbox to3em{\hrulefill}\thinspace}
\providecommand{\MR}{\relax\ifhmode\unskip\space\fi MR }
\providecommand{\MRhref}[2]{%
  \href{http://www.ams.org/mathscinet-getitem?mr=#1}{#2}
}
\providecommand{\href}[2]{#2}

\end{document}